\title{%
  \MakeUppercase{A Framework for Algorithm Stability\newline and its Application to Kinetic Euclidean MSTs}
	\thanks{A preliminary version of this work~\cite{meulemans2018framework} was presented at the 13th Latin American Symposium on Theoretical Informatics (LATIN 2018) and this work is part of the last author's PhD thesis~\cite{julesthesis}.
	}
}
\author{%
  Wouter Meulemans,%
  \thanks{\affil{TU Eindhoven}, 
          \email{\{w.meulemans,b.speckmann,k.a.b.verbeek,j.j.h.m.wulms\}@tue.nl}}\,
  Bettina Speckmann,%
  \footnotemark[2]\,
  Kevin Verbeek,%
  \footnotemark[2]\,
  and Jules Wulms\footnotemark[2]
}
\theoremstyle{plain}
\newtheorem{theorem}{Theorem}
\newtheorem{lemma}{Lemma}
\newtheorem{corollary}{Corollary}
\newcommand{\reals}{\mathbb{R}}
\newcommand{\prop}[1]{\mathrm{\textsc{#1}}}
\DeclareMathOperator{\OPT}{OPT}
\DeclareMathOperator{\St}{St}
\DeclareMathOperator{\TS}{\rho_{TS}}
\DeclareMathOperator{\LS}{\rho_{LS}}
\newcommand{\etal}{\emph{et al.}\xspace}
\newcommand{\myparNS}[1]{\noindent{\bfseries #1.}}
\newcommand{\mypar}[1]{\medskip\myparNS{#1}}
\begin{document}
\maketitle

\begin{abstract}
We say that an algorithm is \emph{stable} if small changes in the input result in small changes in the output. This kind of algorithm stability is particularly relevant when analyzing and visualizing time-varying data. Stability in general plays an important role in a wide variety of areas, such as numerical analysis, machine learning, and topology, but is poorly understood in the context of (combinatorial) algorithms.
In this paper we present a framework for analyzing the stability of algorithms. We focus in particular on the trade-off between the stability of an algorithm and the quality of the solution it computes.
Our framework allows for three types of stability analysis with increasing degrees of complexity: event stability, topological stability, and Lipschitz stability. In addition, we need to refine the model of an algorithm based on how it interacts with the time-varying data, for which we consider several options.
We demonstrate the use of our stability framework by applying it to kinetic Euclidean minimum spanning trees.
\end{abstract}

\section{Introduction}\label{sec:intro}

With recent advances in sensing technology, vast amounts of \emph{time-varying data} are generated, processed, and analyzed on a daily basis.
Hence there is a great need for algorithms that can operate efficiently on time-varying data and that can offer guarantees on the quality of analysis results. A specific relevant subset of time-varying data consists of so-called \emph{motion data}: geolocated, and hence geometric, time-varying data. To deal with the challenges of motion data, Basch~\etal~\cite{basch1999data} in 1999 introduced the \emph{kinetic data structures} (KDS) framework. Kinetic data structures efficiently maintain a (combinatorial) structure on a set of moving objects. The KDS framework has sparked a significant amount of research, resulting in many efficient algorithms for motion data.

The performance of a particular algorithm is usually judged with respect to a variety of criteria, with the two most common criteria being solution quality and running time. These criteria are conflicting: algorithms that find optimal solutions are typically slower than algorithms that are more lenient in solution quality, and compute only approximations to optimal solutions. In the context of algorithms for time-varying data, a third important criterion is \emph{stability}. Whenever analysis results need to be communicated to humans, for example via visual representations, it is important that these results are \emph{stable}: small changes in the data result in small changes in the output. 

These changes in the output can be continuous or discrete. As an example, consider maintaining a Euclidean minimum spanning tree (EMST) of a set of moving points. As the points move, edges continuously grow and shrink in length. However, at certain points in time, one edge has to be swapped for another edge to maintain minimum length, and hence no algorithm that maintains an optimal EMST can be stable. If the output of our algorithm is a visualization, then sudden changes in the visual representation of data disrupt the so-called \emph{mental map}~\cite{kitchin1994cognitive} of the user and prevent the recognition of temporal patterns. Stability also plays a role if changes in the result of an algorithm are costly in practice (e.g. in physical network design, where frequent significant changes to the network are expensive). Attaining stability is non-trivial, though, since this criterion brings new trade-offs into play (see Figure~\ref{fig:trade-offs}): for example, an optimal solution may not be stable, and hence computing a stable solution requires an algorithm to deviate from the optimum.

\begin{figure}
	\centering
	\includegraphics[page=1]{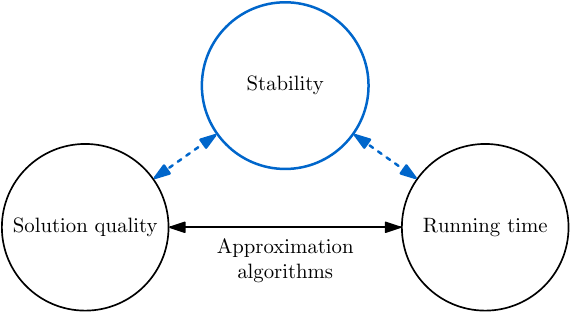}
	\caption{Trade-offs between quality criteria for algorithms. Traditional criteria and their trade-off are shown in black; additional trade-offs for time-varying data are dashed blue.}
	\label{fig:trade-offs}
\end{figure}

The stability of algorithms and other (computational) methods has been well-studied in a variety of research areas, such as numerical analysis~\cite{higham2002accuracy}, machine learning~\cite{bousquet2002}, control systems~\cite{bacciotti2006liapunov}, and topology~\cite{cohen2005stability}. In contrast,
the stability of combinatorial algorithms for time-varying data has received little attention in the theoretical computer science community so far. Here, it is of particular interest to understand the trade-offs between solution quality, running time, and stability. To illustrate the trade-off between solution quality and stability, again consider an EMST. As the points move, the edges may have to change frequently and significantly to maintain minimum length. On the other hand, if we start with an EMST for the input point set and never change it combinatorially, then the spanning tree we maintain is very stable -- but over time it can devolve into a very long spanning tree.

Our goal, and the focus of this paper, is to understand the possible trade-offs between solution quality and stability. This is in contrast to earlier work on stability in other research areas, such as the ones mentioned above, where stability is usually considered in isolation. Since there are currently no suitable tools available to formally analyze trade-offs involving stability, we introduce a new analysis framework. We believe that there are many interesting and relevant questions to be solved in the general area of algorithm-stability analysis and our framework is a first meaningful step towards tackling them.

\mypar{Results and organization} We present a framework to analyze the stability of (combinatorial) algorithms.
In this paper, we focus on analyzing the trade-off between stability and solution quality, omitting running time from consideration.
Our framework allows for three types of stability analysis of increasing degrees of complexity, along with increasing stability requirements: \emph{event stability}, \emph{topological stability}, and \emph{Lipschitz stability}.
These stability types can be applied both to motion data and to more general time-varying data. Besides these modes of analysis, we distinguish between three models for algorithms on time-varying data: \emph{stateless}, \emph{state-aware}, and \emph{clairvoyant} algorithms. The model of an algorithm is based on the availability of the time-varying input and influences how much stability can be achieved.
We demonstrate the use of our stability framework by applying it to the problem of kinetic Euclidean minimum spanning trees (EMSTs). Some of our results for kinetic EMSTs are directly applicable in other settings.

In Section~\ref{sec:framework} we give an overview of our framework for the analysis of algorithm stability. In Sections~\ref{sec:eventstable}, \ref{sec:topostable}, and \ref{sec:Lipstable} we describe event stability, topological stability, and Lipschitz stability, respectively. In each of these sections we first describe the stability analysis in a generic setting and then use it to analyze the kinetic EMST problem.In Section~\ref{sec:conclusion} we close with some concluding remarks on our stability framework.

\mypar{Related work} Stability is a natural concern in visually oriented research areas such as graph drawing, (geo-)visualization, and automated cartography. For example, in dynamic map labeling~\cite{been2010optimizing,gemsa2011sliding,gemsa2016consistent,nollenburg2010dynamic},
the \emph{consistent dynamic labeling} model allows a label to appear and disappear only once, making it very stable. There are very few theoretical results, with the
noteworthy exception of so-called simultaneous embeddings~\cite{brass2007simultaneous,frati2009constrained} in graph drawing, which can be seen as a very restricted model of stability. However, none of these results offer any real structural insight into the trade-off between solution quality and stability.

In combinatorial optimization, Bilu and Linial~\cite{DBLP:journals/cpc/BiluL12} introduced the notion of $\gamma$-stable solutions, which describes the trade-off between solution quality and stability for instances of discrete optimization problems: an instance is $\gamma$-stable if its unique optimal solution remains unchanged after a perturbation with a factor~$\gamma$. In computational geometry there are also a few results on the trade-off between solution quality and stability. Specifically, the stability of kinetic 1-center and 1-median problems has been studied by Bespamyatnikh~\etal~\cite{DBLP:conf/dialm/BespamyatnikhBKS00}. They developed approximations by fixing the speed at which the center/median point moves and provide some results on the trade-off between solution quality and speed. Durocher and Kirkpatrick~\cite{durocher2006steiner} study the stability of centers of kinetic point sets as well, and define the notion of $\kappa$-stable center functions, which is closely related to our concept of Lipschitz stability. In later work \cite{durocher2008bounded} they consider the trade-off between the solution quality of Euclidean $2$-centers and a bound on the velocity with which the $2$-centers can move. De Berg~\etal~\cite{de2013kinetic} show similar results in the black-box KDS model.

One can argue that the KDS framework~\cite{guibas2004kinetic} already indirectly considers stability in a limited form, namely as the number of \emph{external events}, which are (combinatorial) changes in the maintained structure. However, internally, a KDS often processes more events than just the external events. For efficiency, a KDS should process as few events as possible, with the number of external events being a fixed lower bound, since a KDS may not deviate from the chosen structure. In contrast, we allow ourselves to sacrifice solution quality to compute a more stable approximate structure with fewer external events. 

Kinetic Euclidean minimum spanning trees and related structures have been studied extensively.
Katoh~\etal~\cite{katoh1995minimum} proved an upper bound of $O(n^3 2^{\alpha(n)})$ for the number of external events of EMSTs of $n$ linearly moving points, where $\alpha(n)$ is the inverse Ackermann function.
Rahmati~\etal~\cite{rahmati2015simple} present a kinetic data structure for EMSTs in the plane that processes $O(n^3 \beta_{2s + 2}^2(n) \log n)$ events in the worst case, where $s$ is a measure for the complexity of the point trajectories and $\beta_s(n)$ is an extremely slow-growing function.
The best known lower bound for external events of EMSTs in $d$ dimensions is $\Omega(n^d)$~\cite{monma1992transitions}. 
Since the EMST is a subset of the Delaunay triangulation, one can also consider to kinetically maintain the Delaunay triangulation instead. Fu and Lee~\cite{fu1991voronoi}, and Guibas~\etal~\cite{guibas1992voronoi} show that the Delaunay triangulation undergoes $O(n^2 \lambda_{s+2}(n))$ external events (near-cubic), where $\lambda_s(n)$ is the maximum length of an $(n, s)$-Davenport-Schinzel sequence~\cite{sharir1995davenport}. On the other hand, the best known lower bound for external events of the Delaunay triangulation is only $\Omega(n^2)$~\cite{sharir1995davenport}. Rubin improves the upper bound to $O(n^{2+\varepsilon})$, for any $\varepsilon > 0$, if the number of degenerate events is limited~\cite{rubin2013topological}, or if the points move along a straight line with unit speed~\cite{rubin2015kinetic}). Agarwal~\etal~\cite{agarwal2015stable} also consider a more stable version of the Delaunay triangulation, which undergoes at most a nearly quadratic number of external events. However, external events for EMSTs do not necessarily coincide with external events of the Delaunay triangulation~\cite{rahmati2012kinetic}. We can also consider approximations of the EMST, for example via spanners or well-separated pair decompositions~\cite{arya2016fast}. However, kinetic $t$-spanners already undergo $\Omega(\frac{n^2}{t^2})$ external events~\cite{gao2006deformable}. Our stability framework allows us to reduce the number of external events even further and to still state something meaningful about the quality of the resulting EMSTs.

Since the first introduction of our framework in 2018~\cite{meulemans2018framework}, multiple notions of stability have been proposed for various optimization problems. Akitaya~\etal~\cite{DBLP:journals/jgaa/AkitayaBBCMSS23} studied computing a single spanning tree for a set of moving points. They consider different optimization functions that lead to NP-hardness results as well as (approximation) algorithms. In the dynamic setting, De Berg~\etal introduced $k$-stable algorithms~\cite{DBLP:conf/walcom/BergMS24,DBLP:conf/approx/BergSS23,DBLP:journals/siamdm/BergSS24}, which make up to $k$ changes to the solution at every dynamic insertion and/or deletion in the input. Furthermore, Kumabe and Yushida consider Lipschitz continuous algorithms for graph and covering problems, as well as optimization games~\cite{DBLP:journals/corr/abs-2307-08213,DBLP:conf/focs/KumabeY23,DBLP:conf/icalp/KumabeY24}. Additionally, Abdelkader and Mount~\cite{abdelkader2025differentiable} studied the trade-off between solution quality and differentiability, imposing some kind of continuity on their solution space: they compute a data structure for approximate distance queries, which can additionally return the gradient of the distance function at any query point, to facilitate gradient-based optimization. Finally, our framework has been used to study the stability of the $k$-center problem~\cite{hoog2018topological} and of shape descriptors, such as bounding boxes and bounding strips~\cite{meulemans2019stability}.

\section{Stability framework}\label{sec:framework}

Intuitively, we can say that an algorithm is stable, if small changes in the input lead to small changes in the output. More formally, let $\Pi$ be an optimization problem that, given an input instance $I$ from a set $\mathcal{I}$, asks for a solution $S$ from a set $\mathcal{S}$ that minimizes (or maximizes) some optimization function $f\colon \mathcal{I}\times\mathcal{S} \rightarrow \reals$.

An algorithm $\mathcal{A}$ for $\Pi$ maps an input instance $I$ to a solution $S$. For an optimal solution $\OPT\colon \mathcal{I} \rightarrow \mathcal{S}$ of input instance $I$, it holds that $f(I, \OPT(I)) = \min_{S \in \mathcal{S}} f(I, S)$ is minimal over all possible solutions $S \in \mathcal{S}$ for $I$. Note that we did not define $\mathcal{A}$ as a function between input and output. This is intentional and will be further elaborated on in Section~\ref{subsec:algmodels}, where we distinguish between various algorithmic models that interact differently with the time-varying aspect of the data.

When working with time-varying data, the input instance is not static, and for the purpose of introducing time-varying input, we model time-varying input instances as a sequence of $T$ static inputs $I_1, I_2, \ldots, I_T$. In this time-varying setting, algorithm $\mathcal{A}$ maps the input sequence $I_1, I_2, \ldots, I_T$ to a sequence of solutions $S_1, S_2, \ldots, S_T$. An optimal solution $\OPT$ for $\Pi$ on this time-varying input is then defined by individual static solutions $\OPT$ for each input $I_i$ in the sequence, such that $f(I_i,\OPT(I_i))$ is minimized for each input separately. Note that such a solution has optimal solution quality, and completely ignores stability.

To define the \emph{stability} of an algorithm, we need to quantify changes in the input instances and in the solutions. We can do so by imposing a metric on $\mathcal{I}$ and $\mathcal{S}$. Let $d_{\mathcal{I}}\colon \mathcal{I}\times\mathcal{I} \rightarrow \reals_{\geq 0}$ be a metric for $\mathcal{I}$ and let $d_{\mathcal{S}}\colon \mathcal{S}\times\mathcal{S} \rightarrow \reals_{\geq 0}$ be a metric for $\mathcal{S}$. Figure~\ref{fig:input-solution-space} gives an overview of the above definitions. For an input sequence $\sigma_I = I_1, I_2, \ldots, I_T$ and a sequence of solutions $\sigma_S = S_1, S_2, \ldots, S_T$ we can define the stability of this solution relative to the input as follows.
\begin{equation}\label{eq:discretestability}
\St(\sigma_I, \sigma_S) = \max_{i \in [1,T-1]} \frac{d_\mathcal{S}(S_i,S_{i+1})}{d_\mathcal{I}(I_i,I_{i+1})}
\end{equation}
Finally, let $\Sigma_\mathcal{I}$ be the set of all sequences of input instances. We can then define the stability of an algorithm $\mathcal{A}$, which maps every input $I_i$ in sequence $\sigma_I$ to a solution $S_i$ in a sequence of solutions $\sigma_S$, as follows.
\begin{equation}\label{eq:discretealgorithmstability}
\St(\mathcal{A}) = \sup_{\sigma_I \in \Sigma_I} \St(\sigma_I, \sigma_S)
\end{equation}
The lower the value for $\St(\mathcal{A})$, the more stable we consider the algorithm $\mathcal{A}$ to be. Note that $\St(\mathcal{A})$ does not depend on $f$, and hence the stability analysis is not dependent on whether $\Pi$ is a minimization or maximization problem.

\begin{figure}
	\centering
	\includegraphics{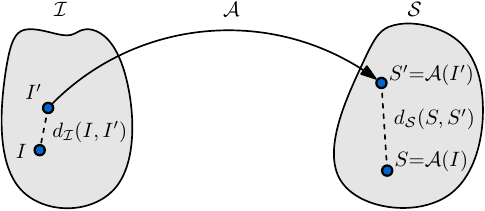}
	\caption{Algorithm $\mathcal{A}$ maps input instances from the input space $\mathcal{I}$ to the solution space $\mathcal{S}$. Metrics $d_\mathcal{I}$ and $d_\mathcal{S}$ allow us to reason about how different two instances are in respectively the input and solution space.}
	\label{fig:input-solution-space}
\end{figure}

There are many other ways to define the stability of an algorithm given the metrics, and the definition above does not cover all scenarios in which we want to analyze stability. For example, we could use algorithms to compute visualizations of time-varying data or the solution of our algorithm could correspond to objects in the physical world that are costly to change or take time to move. In the above definition of stability, if there is no change in the input ($I_i = I_{i+1}$) then there should not be any change in the output either ($S_i = S_{i+1}$), otherwise $\St(\mathcal{A})$ is unbounded. This fits the visualization example, where we want to see no changes in the visualization, unless the data is changing. However, consider a group of security officers at a public event, the officers can (slowly) move into better positions to cover the crowd, even though the crowd may be (mostly) stationary. We can adapt Equation~\ref{eq:discretestability} to express these conditions as well.
\begin{equation}\label{eq:discretestability2}
\St(\sigma_I, \sigma_S) = \frac{\max_{i \in [1,T-1]} d_\mathcal{S}(S_i,S_{i+1})}{\max_{i \in [1,T-1]} d_\mathcal{I}(I_i,I_{i+1})}
\end{equation}
In this definition, instead of using the amount of change between time steps, we use an upper bound for both the input and solution. Since the upper bound on the changes in the solution can occur between different time steps than the upper bound on the changes in the input, $\St(\mathcal{A})$ will not be unbounded when the solution changes without any change in the input. Our framework will allow for stability analysis that can adhere to either of these models, whichever fits the problem at hand.

Kinetic time-varying data is typically gathered through geo-location, as already explained in the introduction. Hence, the above setting would suffice to analyze the stability of this kind of time-varying data. However, as the rate at which the movement data is recorded becomes higher, and the difference between consecutive data points becomes smaller, the data closely resembles the continuous motion of real-life objects. In our theoretical framework we would therefore like to abstract from the discrete nature of the data, and work with continuously changing inputs. This approach is very natural for geometric problems and is used abundantly in computational geometry, for example by using continuously moving points as input.

Instead of a sequence of static inputs, we now get a continuous path $I\colon [0, 1] \rightarrow \mathcal{I}$ through the input space $\mathcal{I}$. For each time $t \in [0,1]$, algorithm $\mathcal{A}$ maps $I(t) \in \mathcal{I}$ to a solution $S(t) \in \mathcal{S}$. Note that the resulting time-varying solution $S\colon [0, 1] \rightarrow \mathcal{S}$ is not necessarily continuous. An optimal solution $\OPT$ for input $I$ minimizes (or maximizes) $f(I(t),\OPT(I(t)))$ for every $t\in [0,1]$, and may have discontinuities as well. The abstraction to (continuous) paths allows us to make a distinction between continuous and discrete changes in the output, which possibly lead to instabilities.

The stability of a solution $S$ relative to $I$ can then be defined similarly to the discrete case, using the metrics on $\mathcal{I}$ and $\mathcal{S}$. The stability is then a generalization of Equation~\ref{eq:discretestability}
\begin{equation}\label{eq:continuousstability}
\St(I, S) = \sup_{t \in [0,1)} \lim_{\varepsilon\to 0} \frac{d_\mathcal{S}(S(t),S(t+\varepsilon))}{d_\mathcal{I}(I(t),I(t+\varepsilon))}
\end{equation}
In this definition of stability, part of the formula is essentially a metric derivative, as we compare the speed along parameterized paths $I(t)$ and $S(t)$ in metric spaces $(\mathcal{I}, d_\mathcal{I})$ and $(\mathcal{S}, d_\mathcal{S})$, respectively. Let $\mathcal{P}_\mathcal{I}$ be the set of continuous paths through $\mathcal{I}$. The stability of an algorithm $\mathcal{A}$ can now also be generalized from Equation~\ref{eq:discretealgorithmstability}. 
\begin{equation}\label{eq:continuousalgorithmstability}
\St(\mathcal{A}) = \sup_{I \in \mathcal{P}_\mathcal{I}} \St(I, S)
\end{equation}
Again, this definition assumes that no change in the input should result in no change in the output, otherwise $\St(\mathcal{A})$ is unbounded. If we want the stability to be less strict as in Equation~\ref{eq:discretestability2}, we can also generalize this to the continuous setting:
\begin{equation}\label{eq:continuousstability2}
\St(I, S) = \frac{\sup_{t \in [0,1)} \lim_{\varepsilon\to 0} d_\mathcal{S}(S(t),S(t+\varepsilon))}{\sup_{t \in [0,1)} \lim_{\varepsilon\to 0} d_\mathcal{I}(I(t),I(t+\varepsilon))}
\end{equation}

\subsection{Stability vs.~solution quality}
\label{subsec:stabilitytypes}
Optimal solutions are not necessarily stable, since small changes in the input can drastically change how an optimal solution looks.
In the above definitions such a scenario leads to instability as well: the stability of optimal solutions $\St(\OPT)$ is influenced by how far apart consecutive solutions are in the solution space, with respect to the distance between the corresponding inputs in the input space. For many optimization problems, the optimal solutions produced by $\OPT$ may be very unstable. The instabilities can be the result of $\OPT$ changing very fast, albeit continuously, or due to $\OPT$ having discontinuities. This suggests an interesting trade-off between the stability of an algorithm and the solution quality: since $\OPT$ is unstable, stable solutions cannot have optimal solution quality either.

Although the definitions described above (in Equations~\ref{eq:discretealgorithmstability} and \ref{eq:continuousalgorithmstability}) nicely capture the stability of an algorithm, they do not directly allow a feasible analysis of the trade-off between stability and solution quality: it is hard to compute the stability of an algorithm (as it depends on all input sequences/paths), let alone reason about all possible algorithms. Furthermore, it requires the definition of suitable metrics $d_\mathcal{I}$ and $d_\mathcal{S}$, and it is not always clear how to choose these metrics so that we can obtain meaningful results. Additionally, it is not always clear how to handle optimization problems with continuous input and discrete solutions: when using Equation~\ref{eq:continuousstability} to define stability, any change in solution would result in unbounded stability. However, we would still like to distinguish between solutions that change often from solutions that change rarely, which can be considered more stable.

As we have seen above, the trade-off between stability and solution quality is mainly influenced by the following three aspects: (1) how often the combinatorial structure of a solution changes, (2) whether there are discrete changes in a solution or only continuous changes, and (3) how far apart different solutions are, or how fast one solution can transform into another. In our framework we aim to overcome the drawbacks of the above described formulation of stability, by tackling the three aspects influencing the trade-off in isolation. Hence to analyze the trade-off between stability and solution quality, we propose to measure how the solution quality is affected by imposing various requirements on the stability of an algorithm. These requirements are formalized in the following three types of stability.

\begin{description}
	\item[Event stability] follows the setting of kinetic data structures. That is, the input (a~set of moving objects) changes continuously as a function over time. However, contrary to typical KDSs where a constraint is imposed on the solution quality, we aim to enforce the stability of the algorithm. For event stability we disallow the algorithm to change the solution too often. Doing so directly is problematic, but we formalize this approach using the concept of $k$-optimal solutions. We then obtain a trade-off between stability and quality that can be tuned by the parameter $k$. Note that event stability captures only \emph{how often} a solution changes, but not \emph{how much} a solution changes at each event.
	
	\item[Topological stability] takes a first step towards the definition of stability described in Equation~\ref{eq:continuousalgorithmstability}. However, instead of measuring the amount of change using a metric, we merely require the solution to behave continuously. To do so we need to define only a topology on the solution space $\mathcal{S}$ that captures stable behavior. We work under the assumption that the input follows a continuous path through the input space over time, and thus there is also a topology on the input space defining the stable behavior of the input. Surprisingly, even though we ignore the amount of change in a single time step, this type of analysis still provides meaningful information on the trade-off between solution quality and stability. In fact, the resulting trade-off can be seen as a lower bound for any analysis involving metrics that follow the used topology.
	
	\item[Lipschitz stability] fully captures the definition in Equation~\ref{eq:continuousalgorithmstability}. As the name suggests, it is inspired by Lipschitz continuity: if we see an algorithm as a function from input to output, then this function should be Lipschitz continuous. We provide an upper bound on the Lipschitz constant $K$, which restricts the speed at which the solution can change, relative to the change in the input. This further restricts the stability compared to topological stability, where there is no bound on the speed at which the solution could change. We are then again interested in the quality of the solutions that can be obtained with any $K$-Lipschitz stable algorithm. Given the complexity of this type of analysis, a complete trade-off for any value of the Lipschitz constant $K$ is typically out of reach, but results for sufficiently small or large values can still be of significant interest.
\end{description}

Lipschitz stability follows the algorithm stability definition introduced earlier and is thus the preferred type of analysis. However, its reliance on metrics $d_\mathcal{I}$ and $d_\mathcal{S}$ and speed parameter $K$ often makes Lipschitz stability analysis prohibitively challenging or infeasible. The other types of stability analysis simplify the stability requirement, making the analysis significantly easier. Specifically, topological stability analysis only relies on topologies on input and solution space, and for event stability analysis we need to count only the number of changes in a solution. Although these types of analysis do not fully capture all aspects of stability, they do offer useful insight into the interplay between problem instances, solutions, and the optimization function, which is invaluable for the development of stable algorithms.

\subsection{Algorithmic models}
\label{subsec:algmodels}
When applying either of the above definitions to analyze stability, we should always be aware of the algorithm at hand. Algorithms for time-varying input can adhere to different models, which are differentiated by the availability of the input. The model of an algorithm~$\mathcal{A}$ influences the results of the stability analysis. For input $I(t)$ depending on time $t$, we distinguish the following models.

\begin{description}
	\item[Stateless algorithms] depend only on the input $I(t)$ at a particular point in time, and no other information of earlier times. This in particular means that if $I(t_1) = I(t_2)$, then $\mathcal{A}$ produces the same output at time $t_1$ and at time $t_2$. These algorithms are essentially functions from input to output.
	
	\item[State-aware algorithms] have access not only to the input $I(t)$ at a particular time, but also maintain a state over time; in practice this is typically the output at the previous point in time. Thus, even if $I(t_1) = I(t_2)$, then $\mathcal{A}$ may produce different results at time $t_1$ and $t_2$ if the states at those times are different. 
	
	\item[Clairvoyant algorithms] have access to the complete function $I(t)$ and can adapt to future inputs. Thus, the complete output can be computed offline. 
\end{description}

\begin{figure}
	\centering
	\includegraphics{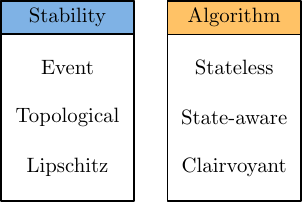}
	\caption{Overview of the stability types and algorithmic models defined by our framework.}
	\label{fig:framework-overview}
\end{figure}

While this distinction between types of algorithms resembles characterizations made in other areas of algorithmic research, we chose these names to signify how stability is influenced by the type of algorithm. For example, the state of an algorithm can be seen as knowing its \emph{history}, which is a term used in kinetic data structures to explain how the motion of the input up to a certain point in time influences the current state of the data structure. However, for stability we are interested only in recent history, as we want to ensure that the current solution does not change too quickly. Furthermore, clairvoyant algorithms are often called \emph{offline} algorithms in terms of streaming algorithms and dynamic data, since the complete output over time can be computed offline. However, for stability the clairvoyant aspect of this model is most important: adapting solutions to future inputs to minimize instabilities. In addition, stateless and state-aware algorithms are often called \emph{online} algorithms\footnote{Online algorithms generally consider memory efficiency as an additional quality criterion. Like running time, such considerations are beyond the scope of this work.} for dynamic data, or algorithms in the \emph{black-box} model for kinetic data structures, as the complete input over time is not known in advance. While this property indeed influences stability, we choose to make an additional distinction between stateless and state-aware: the stability between these models can differ, as shown in Sections~\ref{sec:eventstable} and~\ref{sec:topostable}.

Figure~\ref{fig:framework-overview} gives an overview of all parts of the framework.  
State-aware algorithms are arguably the most interesting of the three models for the following reasons. For many applications, for example online route planning, the input data is not available in full beforehand, rendering clairvoyant algorithms infeasible in such cases. Furthermore, a stateless algorithm cannot utilize the history to ensure stability, which sometimes results is worse stability compared to state-aware algorithms (see, for example, Section~\ref{sec:topostable}).
We conclude this section with a discussion on how to use our framework. 

\subsection{Applying the framework}
As described earlier in this section, the most intuitive way to analyze the trade-off between stability and solution quality is as follows. We enforce stability on the solutions to an optimization problem $\Pi$, and measure how high the solution quality can still be under this restriction. In our framework, this is done by choosing a type of stability along with a model for the algorithm solving the problem, and analyzing the solution quality that can be achieved. Preferably, we can still use optimal solutions, but in case those are unstable, we want to approximate an optimal solution as well as possible.

In general there are two approaches to find stable solutions with high quality: (1) we can start from an optimal solution $\OPT$ and use optimization function $f$ of $\Pi$ to look for solutions close to $\OPT$ that are stable, or (2) we can try to define a structure that is inherently stable according to our chosen type of stability, and analyze its solution quality using optimization function $f$. While we generally stick to the first approach, for stateless algorithms, this approach is not a viable strategy: guided by $f$, a stateless algorithm always finds $\OPT$, since it can access only the data at a particular point in time and use $f$ to look for an optimal solution. For stateless algorithms we therefore use the latter approach, which results in a continuous function when analyzing topological or Lipschitz stable solutions, as a stateless algorithm is a function and discontinuities are not allowed for these stability types.

The stable structures in the second approach are often inherently stateless. An example of this can be found in the stability analysis of the kinetic Euclidean 2-center problem by Durocher and Kirkpatrick~\cite{durocher2006steiner}. In the kinetic Euclidean 2-center problem, a set of moving points should be covered by two disks of minimum radius. They define reflection-based 2-centers, as a stable alternative to optimal Euclidean 2-centers. To find a reflection-based 2-center, we place a single disk and reflect its position through a reflection center, a central location in the point set. This can be computed on a static input, and hence a stateless algorithm can find the same solution on a time-varying input. To do so, the solution space $\mathcal{S}$ must be restricted to solution space $\mathcal{S}'$, consisting only of stable solutions, such as reflection-based 2-centers. A stateless algorithm can then find a stable optimal solution $\OPT'$ in $\mathcal{S}'$, and ideally $\OPT'$ should approximate the solution quality of $\OPT$. Note that in the other algorithmic models, this stateless computation can be mimicked, since those models are less restrictive in their access to data.

In practice, we often use a combination of the two approaches: the first approach, guided by the optimization function $f$ of $\Pi$, requires analysis of optimal and approximate solutions, and hence usually results in the most theoretical insights into the stability of those solutions. Even though the second approach is less straightforward and guided, the (often stateless) nature of this approach can simplify the stability analysis, making it a viable alternative when results are otherwise hard to achieve.

\begin{table}
    \centering
    \caption{Our results for combinations of stability types and algorithmic models.}
    \label{tab:results}
    \smallskip
    \begin{tabular}{c||c|>{\centering}m{3.5cm}|c|}
          & Event stability & Topological stability & Lipschitz stability 
         \\
         \hline
         \rule{0pt}{4ex} 
         Stateless algorithms & Section~\ref{sec:eventEMST-stateless} & Section~\ref{sec:topoEMST-stateless} & $\leftarrow$ Lower bound \\
         \rule{0pt}{4ex} 
         State-aware algorithms & Section~\ref{sec:eventEMST} & Section~\ref{sec:topoEMST} & Section~\ref{sec:lipschitzemst} \\
         \rule{0pt}{4ex} 
         Clairvoyant algorithms & Upper bounds $\uparrow$ & $\uparrow$ & $\uparrow$ \\[1.5ex]
         \hline
    \end{tabular}
\end{table}

Since both of the suggested approaches have their merits and drawbacks, we utilize both for our results in the upcoming sections. Specifically, we apply our framework to the kinetic Euclidean minimum spanning tree problem and analyze the solution quality achieved by algorithms adhering to different stability types and following various algorithmic models. We ensure that each combination of types and models is covered; see Table~\ref{tab:results} for an overview.

Note that certain results cover multiple cells in the table.  For example, topological stability can be seen as Lipschitz stability with an unbounded Lipschitz constant $K$. The solution quality obtained by a $K$-Lipschitz stable algorithm, with bounded~$K$, is therefore at least as bad as for a topologically stable algorithm. This means that lower bounds on solution quality for topological stability extend to the Lipschitz stability column. Similarly, clairvoyant algorithms can mimic state-aware algorithms and forgo looking into the future. Thus, upper bounds on solution quality obtained for state-aware algorithms extend to the clairvoyant algorithms row. In Section~\ref{sec:topostable-analysis}, we elaborate further on the relations between stability types and algorithmic models, and how these relations influence proving strategies.

\section{Event stability}\label{sec:eventstable}

The least restrictive form of stability is event stability. Like the number of external events in KDSs, event stability captures only how often the solution changes.

\subsection{Event stability analysis}
Let $\Pi$ be an optimization problem with a set of input instances $\mathcal{I}$, a set of solutions $\mathcal{S}$, and optimization function $f\colon \mathcal{I}\times\mathcal{S} \rightarrow \reals$. Following the framework of kinetic data structures, we assume that the input instances include certain parameters that change as a function of time, such as point coordinates. To apply event stability analysis, we require that all solutions have a combinatorial description, that is, the solution description does not use the time-varying parameters of the input instance. We further require that every solution $S \in \mathcal{S}$ is feasible for every input instance $I \in \mathcal{I}$. Insertions or deletions of elements can break this assumption: a spanning tree on $n$ points is not a feasible solution for $n+1$ point, since one edge is missing. Note that an insertion or a deletion would typically force an event, and a kinetic data structure would be allowed to recompute. Thus it makes sense to apply our event stability analysis only between such insertions and deletions.

For example, in the setting of kinetic EMSTs, the input instances would consist of a fixed set of points. The coordinates of these points then change as a function over time. A solution of the kinetic EMST problem consists of the combinatorial description of a tree on the set of input points. Note that every tree describes a feasible solution for any input instance, if we do not insist on any additional restrictions like, e.g., planarity. The minimization function $f$ then simply measures the total length of the tree, which does depend on the time-varying parameters of the problem instance.

We want to restrict how often a solution changes, in such a way that the solution is still close to an optimal solution in solution quality. Instead of doing so directly, we introduce the concept of $k$-optimal solutions. Let $d_\mathcal{I}$ be a metric on the input instances, and let $\OPT\colon \mathcal{I} \rightarrow \mathcal{S}$ describe the optimal solutions. We say that a solution $S \in \mathcal{S}$ is \emph{$k$-optimal} for an instance $I \in \mathcal{I}$ if there exists an input instance $I' \in \mathcal{I}$ such that $f(I', S) = f(I', \OPT(I'))$ and $d_\mathcal{I}(I, I') \leq k$. Any optimal solution is therefore always $0$-optimal.
We need to point out that the above definition requires a form of normalization on the metric $d_\mathcal{I}$, similar to that of, e.g., smoothed analysis~\cite{spielman2004smoothed}. We hence require that there exists a constant $c$ such that every solution $S \in \mathcal{S}$ is $c$-optimal for every instance $I \in \mathcal{I}$. For technical reasons we require the latter condition to hold only for some time interval $[0, T]$ of interest. Note that the concept of $k$-optimal solutions is closely related to \emph{backward error analysis} in numerical analysis \cite{wilkinson2003error}.% as well as $\gamma$-stable instances defined by Bilu and Linial~\cite{DBLP:journals/cpc/BiluL12}.

Following the framework of kinetic data structures, we typically require the functions of the time-varying parameters to be well-behaved (e.g., polynomial functions), for otherwise we cannot derive meaningful bounds. The event stability analysis then considers two aspects. First, we analyze how often the solution needs to change to maintain a $k$-optimal solution for every point in time. Second, we analyze how well a $k$-optimal solution approximates an optimal solution. Intuitively, $k$-optimal solutions approximate optimal solutions well, as the input did not change much since the solution was actually optimal. We can then enforce stability by ignoring all events that happen, required to keep the solution optimal, as long as a solution stays $k$-optimal. Once the solution is no longer $k$-optimal, we recompute to find the optimal solution, and repeat the process. This reduces the number of events, while the solution quality becomes only moderately worse. Typically we require certain reasonable (input) assumptions, to obtain good approximation bounds (as a function of $k$).

\subsection{Event stability of state-aware algorithms for EMSTs}\label{sec:eventEMST}
Our input consists of a set of points $P = \{p_1, \ldots, p_n\}$ where each point $p_i$ describes a trajectory by the function $p_i\colon [0, T] \rightarrow \reals^d$. The goal is to maintain a combinatorial description of a short spanning tree on $P$ that does not change often. We assume that the functions $x_i$ are polynomials with bounded degree $s$.

To use the concept of $k$-optimal solutions, we first need to normalize the coordinates. We simply assume that $p_i(t) \in [0, 1]^d$ for $t \in [0, T]$. This assumption may seem overly restrictive for kinetic point sets, but note that we are interested only in relative positions, and thus the frame of reference may move with the points. Next, we define the metric $d_\mathcal{I}$ along the trajectory of all points as follows.
\begin{equation}\label{eq:Euclmetric}
d_\mathcal{I}(I, I') = \max_i \|p_i - p'_i\|
\end{equation}
Note that this metric, and the resulting definition of $k$-optimal solutions, is not specific to EMSTs and can be used in general for problems with kinetic point sets as input. In our case $\|a-b\|$ denotes the distance between $a$ and $b$ in the (Euclidean) $\ell_2$ norm. Now let $S(t) = \OPT(t)$ be the EMST at time $t$. Then, by definition, $S(t)$ is $k$-optimal at time $t'$ if $d_\mathcal{I}(I(t), I(t')) \leq k$. As explained before, our approach is now very simple: we compute the EMST and keep that solution as long as it is $k$-optimal, after which we compute the new EMST, and so forth. Below we analyze this approach.
%how often we need to recompute the EMST, and how well a $k$-optimal solution approximates the EMST.

\paragraph{Number of events.}
To bound the number of events, we first need to bound the speed of any point with a polynomial trajectory and bounded coordinates. For this we can use a classic result known as the \emph{Markov Brothers' inequality}.
\begin{lemma}[\cite{markoff1916polynome}]\label{thm:markovbrothers}
	Let $h(t)$ be a polynomial with degree at most $s$ such that $h(t) \in [0, 1]$ for $t \in [0, T]$, then $|\mathrm{d}h(t)/\mathrm{d}t|\leq s^2 / T$ for all $t \in [0, T]$.
\end{lemma}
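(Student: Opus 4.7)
The plan is to reduce the statement to the classical Markov brothers' inequality on the interval $[-1, 1]$, which asserts that any polynomial $p$ of degree at most $s$ with $|p(x)| \leq 1$ on $[-1, 1]$ satisfies $|p'(x)| \leq s^2$ everywhere on $[-1, 1]$. Since the citation in the lemma is to Markoff's original paper, I would treat this inequality as a black box and focus on rescaling it to match the interval $[0, T]$ and the range $[0, 1]$ appearing in the lemma.

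First I would perform a combined linear change of both variable and range. Define $g(u) = 2 h(T(u+1)/2) - 1$ for $u \in [-1, 1]$. Then $g$ is a polynomial in $u$ of the same degree as $h$, so $\deg g \leq s$. The substitution $t = T(u+1)/2$ bijects $[-1, 1]$ onto $[0, T]$, and the affine map $y \mapsto 2y - 1$ sends $[0, 1]$ onto $[-1, 1]$, so $|g(u)| \leq 1$ on $[-1, 1]$. The classical Markov inequality then yields $|g'(u)| \leq s^2$ for every $u \in [-1, 1]$.

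Next I would convert the derivative bound back to $h$. By the chain rule, $g'(u) = 2 \cdot (T/2) \cdot h'(t) = T \cdot h'(t)$, so $|h'(t)| = |g'(u)| / T \leq s^2 / T$. As $u$ sweeps $[-1, 1]$, the corresponding $t$ sweeps $[0, T]$, so the bound holds uniformly on the entire interval of interest. The only arithmetic pitfall is to correctly combine the factor $2$ from the range rescaling with the factor $T/2$ from the interval rescaling, which together produce precisely the $T$ that is needed in the denominator of the final bound.

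The main obstacle is hidden in the classical Markov inequality itself, not in the rescaling. Proving that inequality from scratch requires an extremal argument built around the Chebyshev polynomials: one shows that $T_s$ is extremal among polynomials of degree $s$ bounded by $1$ on $[-1, 1]$, that its derivative attains the value $s^2$ at the endpoints, and then transfers the bound to an arbitrary polynomial via interpolation at the Chebyshev nodes. Since the lemma is stated with a direct citation to Markoff, I would not reproduce this machinery and would keep the proof limited to the short rescaling sketched above.
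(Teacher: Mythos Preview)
Your rescaling argument is correct: the substitution $g(u)=2h(T(u+1)/2)-1$ has degree at most $s$, is bounded by $1$ on $[-1,1]$, and the chain rule gives exactly the factor $T$ needed to turn $|g'(u)|\le s^2$ into $|h'(t)|\le s^2/T$.

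There is nothing to compare against: the paper does not prove this lemma at all. It is stated as a classical result with a citation to Markoff and used as a black box in the proof of Lemma~\ref{lem:eventupper}. Your short rescaling derivation is therefore strictly more detail than the paper provides, and it is the standard way to obtain this variant from the classical $[-1,1]$ formulation.
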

\begin{lemma}\label{lem:eventupper}
	For a kinetic point set $P$ with degree-$s$ polynomial trajectories $p_i(t) \in [0, 1]^d$ ($t \in [0, T]$) we need $O(\frac{s^2}{k})$ changes to maintain a $k$-optimal solution for constant $d$.
\end{lemma}
\begin{proof}
	By Lemma~\ref{thm:markovbrothers} the velocity of any point is at most $s^2 / T$ in one dimension, and thus at most $\sqrt{d} \, s^2 / T = O(s^2/T)$ in $d$ dimensions, assuming $d$ is constant. Now assume that we have computed an optimal solution $S$ for some time $t$. The solution $S$ remains $k$-optimal until one of the points has moved at least $k$ units. Since the velocity of the points is bounded, this takes at least $\Delta t = \Omega(k T / s^2)$ time, at which point we can recompute the optimal solution. Since the total time interval is of length $T$, this can happen at most $T / \Delta t = O(s^2 / k)$ times.
\end{proof}
Next we show that this upper bound is tight up to a factor of $s$, using \emph{Chebyshev polynomials} of the first kind \cite{rivlin1974chebyshev}. A Chebyshev polynomial of degree $s$ with range $[0, 1]$ and domain $[0, T]$ passes through the entire range exactly $s$ times.
\begin{figure}[b]
	\centering
	\includegraphics{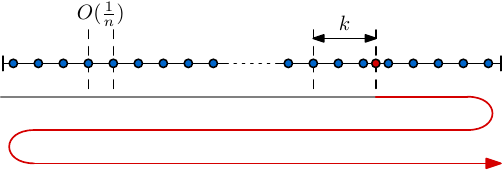}
	\caption{The blue points are stationary, while the red point moves along a trajectory of degree $3$, triggering $\Omega(\frac{s}{k})$ events. The trajectory is shown as an arrow along the 1D space; the gray part has already been traversed.}
	\label{fig:polyLB}
\end{figure}
\begin{lemma}\label{lem:eventlower}
	For a kinetic point set $P$ of $n$ points with degree-$s$ polynomial trajectories $p_i(t) \in [0, 1]^d$ ($t \in [0, T]$) we need $\Omega(\min(\frac{s}{k}, s n))$ changes in the worst case to maintain a $k$-optimal solution.
\end{lemma}
\begin{proof}
	We can restrict ourselves to $d=1$. Let $p_1$ move along a Chebyshev polynomial of degree $s$, and let the remaining points be stationary and placed equidistantly along the interval $[0, 1]$. As soon as $p_1$ meets one of the other points, then $p_1$ can travel at most $k$ units before the solution is no longer $k$-optimal (see Figure~\ref{fig:polyLB}). Therefore, $p_1$ moving through the entire interval requires $\Omega(\min(1/k, n))$ changes to the solution. Doing so $s$ times gives the desired bound.
\end{proof}

For points following polynomial trajectories, we hence proved the following theorem.

\begin{theorem}\label{thm:polynomial-trajectory-events}
    For a kinetic point set $P$ of $n$ points with degree-$s$ polynomial trajectories $p_i(t) \in [0, 1]^d$ ($t \in [0, T]$) we need between $O(\frac{s^2}{k})$ and $\Omega(\min(\frac{s}{k}, s n))$ changes in the worst case to maintain a state-aware $k$-optimal solution.
\end{theorem}
It is important to notice that this lower bound behavior is fairly specific to polynomial trajectories. If we allow more general trajectories, then we can prove a stronger lower bound.
\begin{lemma}
	For a kinetic point set $P$ of $n$ points with degree-$s$ pseudo-algebraic trajectories $p_i(t) \in [0, 1]^d$ ($t \in [0, T]$) we need $\Omega(\min(\frac{sn}{k}, sn^2))$ changes in the worst case to maintain a $k$-optimal solution.
\end{lemma}
\begin{proof}
	We can restrict ourselves to $d=1$. Any two pseudo-algebraic trajectories of degree at most $s$ can cross each other at most $s$ times. We make $n/2$ points stationary and place them equidistantly along the interval $[0, 1]$. The other $n/2$ points follow trajectories that take them through the entire interval $s$ times, in such a way that every point moves through the entire interval completely before another point does so. The resulting trajectories are clearly pseudo-algebraic, and each time a point moves through the entire interval it requires $\Omega(\min(1/k, n))$ changes to the solution. As a result, the total number of changes is $\Omega(\min(\frac{sn}{k}, sn^2))$.
\end{proof}
We can show the same lower bound for algebraic trajectories of degree at most $s$, but this is slightly more involved.
\begin{figure}[b]
	\centering
	\includegraphics{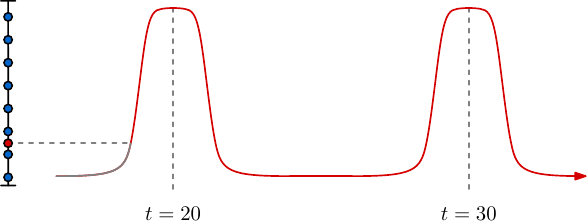}
	\caption{The blue points are stationary, while the red point moves along a trajectory of degree $8$ and is the second point to start moving through the blue points ($p_1(t) = \sum_{j=0}^{2} \frac{1}{(t-10\cdot j-20)^4+1}$). The trajectory is shown as an arrow along the 1D space; the gray part has already been traversed.}
	\label{fig:algebraicLB}
\end{figure}
\begin{lemma}
	For a kinetic point set $P$ of $n$ points with degree-$s$ algebraic trajectories $p_i(t) \in [0, 1]^d$ ($t \in [0, T]$) we need $\Omega(\min(\frac{sn}{k}, sn^2))$ changes in the worst case to maintain a $k$-optimal solution.
\end{lemma}
\begin{proof}
	We can restrict ourselves to $d=1$. We make $n/2$ points stationary and place them equidistantly along the interval $[0, 1]$. The other $n/2$ points follow trajectories that take them through the entire interval $s/4$ times, in such a way that every point moves through the entire interval completely before another point does so. The trajectory of a non-stationary point is $p_i(t) = \sum_{j=0}^{s/4} \frac{1}{(t-10\cdot j-10\cdot i\cdot s/4)^4+1}$. The trajectory consists of $s/4$ moves through the stationary points, one such move every $10$ time units (see Figure~\ref{fig:algebraicLB}). The $i$-th point will be finished $10* s/4$ time units after it starts its first move through the stationary points, while the $(i+1)$-st point starts $10$ units after the $i$-th point finishes. The resulting trajectories are clearly algebraic, and each time a point moves through the entire interval it requires $\Omega(\min(1/k, n))$ changes to the solution. As a result, the total number of changes is $\Omega(\min(\frac{sn}{k}, sn^2))$.
\end{proof}
\paragraph{Solution quality.}  To analyze the solution quality of $k$-optimal solutions, we prove a bound on the ratio between the length of $k$-optimal solutions and the length of optimal EMSTs. In general, we cannot expect $k$-optimal solutions to be a good approximation of an optimal EMST's length: if all points are within distance $k$ from each other, then all solutions are $k$-optimal. We therefore need to make the assumption that the points are spread out reasonably throughout the motion. To quantify this, we use a measure inspired by the \emph{order-$l$ spread}, as defined by Erickson \cite{erickson2005dense}. Let $\prop{mindist}_l(P)$ be the smallest distance in $P$ between a point and its $l$-th nearest neighbor. We assume that $\prop{mindist}_l(P) \geq 1/\Delta_l$ throughout the motion, for some value of $\Delta_l$. We can use this assumption to give a lower bound on the length of the EMST. Pick an arbitrary point and remove all points from $P$ that are within distance $1/\Delta_l$, and repeat this process until the smallest distance is at least $1/\Delta_l$. By our assumption, we remove at most $l-1$ points for each chosen point, so at least $n/l$ points are left. The distance between each pair of points is now at least $1/\Delta_l$, hence an EMST on the remaining points has length $\Omega(\frac{n}{l\Delta_l})$. This readily forms a lower bound on the length of the EMST on $P$: even if all removed points were Steiner points, adding back those points can improve the length of an EMST only by a constant factor \cite{chung1985new}.

\begin{lemma}\label{lem:eventapprox}
	A $k$-optimal solution of the EMST problem on a set of $n$ points $P$ is an $O(1 + kl\Delta_l)$-approximation of the EMST, under the assumption that $\prop{mindist}_l(P) \geq 1/\Delta_l$.
\end{lemma}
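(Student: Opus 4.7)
The plan is to bound the length of the $k$-optimal solution $S$ at time $t$ additively in terms of $k$ and $n$, and then use the spread-based lower bound on the EMST length to turn this additive error into a multiplicative approximation factor.

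First I would unfold the definition: $S$ being $k$-optimal at time $t$ means there is some time $t'$ with $d_\mathcal{I}(t,t')\leq k$ such that $S = \OPT(t')$, i.e.\ $S$ is an actual EMST at the nearby instance $t'$. The key observation is that, because $d_\mathcal{I}(t,t')\leq k$ means every point moves by at most $k$ between $t$ and $t'$, the triangle inequality gives that the length of any individual edge of any spanning tree changes by at most $2k$ between $t$ and $t'$. Since any spanning tree on $n$ points has $n-1$ edges, the total length of any fixed combinatorial tree differs by at most $2k(n-1)$ between $t$ and $t'$.

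Applying this observation to both $S$ and to $\OPT(t)$ (viewed as a fixed combinatorial tree), I would chain the inequalities as follows: the length of $S$ at $t$ is at most the length of $S$ at $t'$ plus $2k(n-1)$; that in turn equals the EMST length at $t'$, which is at most the length of $\OPT(t)$ evaluated at $t'$; and that is in turn at most the EMST length at $t$ plus $2k(n-1)$. So the length of $S$ at $t$ exceeds the optimal EMST length at $t$ by at most $4k(n-1) = O(kn)$.

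Finally, to convert this additive bound into a multiplicative approximation, I would invoke the lower bound on the optimal EMST length derived in the paragraph preceding the lemma: under the assumption $\prop{mindist}_l(P)\geq 1/\Delta_l$, the EMST has length $\Omega(n/(l\Delta_l))$, so $n = O(l\Delta_l \cdot \mathrm{EMST}(t))$. Substituting this in yields that the length of $S$ at time $t$ is at most $(1 + O(kl\Delta_l))$ times the EMST length at time $t$, giving the claimed $O(1+kl\Delta_l)$ approximation. No step looks especially hard; the only thing to be careful about is the per-edge triangle inequality argument and making sure the spread assumption is applied at time $t$ (which is exactly how it is stated).
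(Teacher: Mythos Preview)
Your proposal is correct and follows essentially the same approach as the paper: bound the per-edge change by $2k$ via the triangle inequality, chain through the nearby instance where $S$ is optimal to get an additive $O(kn)$ error, and then divide by the $\Omega(n/(l\Delta_l))$ lower bound on the EMST to obtain the multiplicative $O(1+kl\Delta_l)$ factor. The only cosmetic difference is that the paper phrases the nearby instance as a point set $P'$ rather than a time $t'$, and writes the final inequality $f(P,S)\leq f(P,\OPT)+4kn$ directly instead of spelling out the four-step chain.
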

\begin{proof}\belowdisplayskip=-12pt
	Let $S$ be a $k$-optimal solution of $P$ and let $\OPT$ be an optimal solution of $P$. By definition there is a point set $P'$, with $d_\mathcal{I}(P, P') \leq k$, for which the length of solution $S$ is at most that of the optimal solution $\OPT'$ of $P'$. Since $d_\mathcal{I}(P, P') \leq k$, the length of each edge can grow or shrink by at most $2k$ when moving from $P'$ to $P$. Therefore we can state that $f(P, S) \leq f(P, \OPT) + 4 k n$, as in the worst case every edge in the $k$-optimal solution can have grown by $2k$ while in an optimal solution every edge can have shrunk by $2k$. Now, using the lower bound of $\Omega(\frac{n}{l\Delta_l})$ on the length of an EMST, we obtain the following.
	\begin{align*}
	f(P, \OPT) + 4kn &\leq f(P, \OPT) + 4k O(f(P, \OPT) l \Delta_l) \\
	&= O(1 + k l \Delta_l)\cdot f(P, \OPT)
	\end{align*}
\end{proof}
Note that there is a clear trade-off between the approximation ratio and how restrictive the assumption on the spread is. Regardless, we can obtain a decent approximation, while processing only a small number of events. Choosing reasonable values $k = O(1/n)$, $l = O(1)$, and $\Delta_l = O(n)$, then, under the assumptions, a constant-factor approximation of an EMST can be maintained while processing only a linear number of events, extending Theorem~\ref{thm:polynomial-trajectory-events} to:

\begin{theorem}
    For a kinetic point set $P$ of $n$ points with constant-degree polynomial trajectories $p_i(t) \in [0, 1]^d$ ($t \in [0, T]$) we need $\Theta(n)$ changes in the worst case to maintain a state-aware $O(1)$-approximation of the EMST, under the assumption that $\prop{mindist}_{O(1)}(P) \geq O(1/n)$.
\end{theorem}

\subsection{Event stability of stateless algorithms for EMSTs}\label{sec:eventEMST-stateless}
In the previous section, we showed how a constant-factor approximation of an EMST can be achieved with state-aware algorithms that process at most a linear number of events. In this section, we extend those results to stateless algorithms, by fitting a grid to the input space and rounding all points in $P$ to this grid before computing an EMST.

\paragraph{Grid construction.} We place an axis-aligned grid $G$ over the input points~$P$, where each grid cell is a box with side length $k$. Before computing an EMST, we round the coordinates of each point $p\in P$ to the nearest grid point $g$ as follows. In two dimensions, a point $p = (x, y)$ is rounded to $g = (a, b)$ if $-\frac{k}{2} \leq x - a < \frac{k}{2}$ and $-\frac{k}{2} \leq y - b < \frac{k}{2}$. Note that the region in which $p$ will be rounded to $g$ also forms a box with sides of length $k$, which is centered around $g$, closed at the bottom and left side, and open at the top and right side. These boxes around the grid cells partition the plane into regions, in which points round to the same grid point.

\paragraph{Number of events.} We first show that the bounds on the number of events in the previous section extend. For the previous bounds we used $k$-optimal solutions instead of rounding to a grid, and hence we now show that this leads to only small differences.

\begin{lemma}\label{lem:eventupper-stateless}
	For a kinetic point set $P$ of $n$ points with degree-$s$ polynomial trajectories $p_i(t) \in [0, 1]^d$ ($t \in [0, T]$) we need $O(\frac{s^2n}{k} + sn)$ changes to maintain a $k$-optimal solution for constant $d$.
\end{lemma}
\begin{proof}
	We again use Lemma~\ref{thm:markovbrothers} to find that the velocity of any point is at most $s^2 / T$ in one dimension. Now assume that we have computed a solution $S$ for point set $P'$, the original points $P$ rounded to grid points in $G$. When a point in $P$ is rounded to a different grid point, $P'$ changes and we have to recompute. First consider only a single dimension. A point $p\in P$ has to move $k$ units in this dimension to completely traverse the region that rounds to one grid point. Traversing such a region hence takes at least $\Delta t = \Omega(k T / s^2)$ time, and in a time interval of length $T$, at most $O(s^2/k)$ events take place.
	
	However, starting from the initial position of $p$, it may take less than $k$ units to swap the rounding to a different grid point. Similarly, since the points follow degree-$s$ polynomials, every grid line is crossed at most $s$ times. Thus, a point $p$ can cross back and forth over a grid line, without crossing a different grid line in between, at most $s$ times. In these instances, $p$ can swap the point it is rounding to without moving $k$ units. In total, a point $p$ can trigger $O(s^2/k + s)$ events. Thus, for constant dimension $d$, there are $O(\frac{s^2n}{k} + sn)$ events.
\end{proof}
Just as in the state-aware case, this upper bound is tight up to a factor of $s$.
\begin{lemma}\label{lem:eventlower-stateless}
	For a kinetic point set $P$ of $n$ points with degree-$s$ polynomial trajectories $p_i(t) \in [0, 1]^d$ ($t \in [0, T]$) we need $\Omega(\frac{sn}{k} + sn)$ changes in the worst case to maintain a $k$-optimal solution.
\end{lemma}
\begin{proof}
	We can restrict ourselves to $d=1$. Let all points move along Chebyshev polynomials of degree $s$. As soon as a point $p\in P$ changes which grid point it is rounded to, then $p$ can travel at most $k$ units before again changing its rounding. Therefore, $p$ moving through the entire grid once requires $\Omega(1/k)$ changes to the solution. Additionally, $p$ changes direction at most $s$ times, and every time this happens, another change in rounding is triggered. Thus moving through the grid $s$ times, along with the $s$ additional events for changing direction, gives the desired bound for each of the $n$ points.
\end{proof}

\paragraph{Solution quality.} The bound on the solution quality in Lemma~\ref{lem:eventapprox} extends to our new setting, under the same assumption that $\prop{mindist}_l(P) \geq 1/\Delta_l$.  To see this, let $S$ be an optimal solution for $P'$, the input points $P$ rounded to grid points in $G$. By definition of $G$, the points in $P'$ are at most $\frac{\sqrt{d}}{2}k$ removed from their original position in $P$, thus $d_\mathcal{I}(P, P') \leq \frac{\sqrt{d}}{2}k$. Hence, solution $S$ is $\frac{\sqrt{d}}{2}k$-optimal for $P$, and Lemma~\ref{lem:eventapprox} directly applies.

We can again choose constants $k = O(1/n)$, $l = O(1)$, and $\Delta_l = O(n)$, to get the following theorem for stateless EMST algorithms in constant dimensions.

\begin{theorem}
    For a kinetic point set $P$ of $n$ points with constant-degree polynomial trajectories $p_i(t) \in [0, 1]^d$ ($t \in [0, T]$) we need $\Theta(n^2)$ changes in the worst case to maintain a stateless $O(1)$-approximation of the EMST, under the assumption that $\prop{mindist}_{O(1)}(P) \geq O(1/n)$.
\end{theorem}

\section{Topological stability}\label{sec:topostable}
The event stability analysis has two major drawbacks: (1) it is applicable only to problems for which the solutions are always feasible and described combinatorially, and (2) it does not distinguish between small and large structural changes. Topological stability analysis is applicable to a wide variety of problems and enforces continuous changes to the solution.

\subsection{Topological stability analysis}\label{sec:topostable-analysis}
Let $\Pi$ be an optimization problem with input instances $\mathcal{I}$, solutions $\mathcal{S}$, and optimization function $f$. An algorithm $\mathcal{A}$ is \emph{topologically stable} if, for any continuous path $I\colon [0,1] \rightarrow \mathcal{I}$ in $\mathcal{I}$, $\mathcal{A}$ maps it to a continuous path $S$ in $\mathcal{S}$. To properly define a continuous path in $\mathcal{I}$ and $\mathcal{S}$, we need to specify a topology $\mathcal{T}_\mathcal{I}$ on $\mathcal{I}$ and a topology $\mathcal{T}_\mathcal{S}$ on $\mathcal{S}$. An overview of this model can be found in Figure~\ref{fig:topological-input-solution-space}. Alternatively, we could specify metrics $d_\mathcal{I}$ and $d_\mathcal{S}$, but this is typically more involved. Let $\mathcal{P}_\mathcal{I}$ be the set of continuous paths through $\mathcal{I}$. We then want to analyze the approximation ratio $\TS$ of any topologically stable algorithm with respect to $\OPT$, which we will call the \emph{topological stability ratio}. That is, we are interested in the ratio
\begin{figure}
	\centering
	\includegraphics{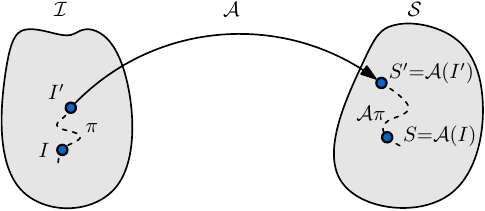}
	\caption{Algorithm $\mathcal{A}$ maps input instances from the input space $\mathcal{I}$ to the solution space $\mathcal{S}$. Continuous path $I$ in the space defined by topology $\mathcal{T}_\mathcal{I}$, is mapped to continuous path $S$ in the space defined by topology $\mathcal{T}_\mathcal{S}$.}
	\label{fig:topological-input-solution-space}
\end{figure}
\begin{equation}\label{eq:topo}
\TS(\Pi, \mathcal{T}_\mathcal{I}, \mathcal{T}_\mathcal{S}) = \inf_{\mathcal{A}} \sup_{I \in \mathcal{P}_\mathcal{I}} \sup_{t \in [0,1]} \frac{f(I(t), S(t))}{f(I(t), \OPT(I(t)))}
\end{equation}
where the infimum is taken over all topologically stable algorithms. Naturally, if $\OPT$ is already topologically stable, then this type of analysis does not provide any insight and the ratio is simply $1$. However, $\OPT$ is not topologically stable if it undergoes discrete changes, and in that case topological stability allows you to measure what solution quality can be achieved by requiring continuity.

The above analysis can even be applied when the solution space (or the input space) is discrete. In such cases, continuity can often be defined using the graph topology of so-called flip graphs, for example, based on edge flips for triangulations or rotations in rooted binary trees. The vertices of such a graph each represent a solution (or input) with a different combinatorial structure, while the edges represent the possible transitions between solutions. To create a continuous solution space, we still represent the discrete space using the vertices of the flip graph, but we create continuity on the edges: we define a (continuous) topological space by representing vertices by points, and representing every edge of the graph by a copy of the unit interval $[0,1]$. These intervals are glued together at the vertices. In other words, we consider the corresponding simplicial $1$-complex.

For EMSTs we can do exactly what we just described, since the solution space is discrete and the vertices of a flip graph represent spanning trees. Although the points in the interior of the edges of this topological space do not represent proper spanning trees, we can still use this topological space in Equation~\ref{eq:topo} by extending $f$ over the edges via linear interpolation. This ensures that the value of $f$ for one of the vertices incident to an edge is as least as high as the value of $f$ anywhere on the edge. We therefore need to consider only the vertices of the flip graph (which represent proper spanning trees) to compute the topological stability ratio. Figure~\ref{fig:spanning-tree-space} shows an example of such a topological space for spanning trees.

\begin{figure}
	\centering
	\includegraphics{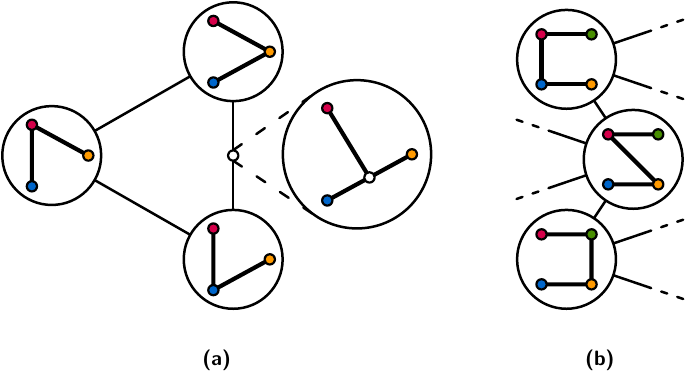}
	\caption{Topological spaces defined by flip graphs for edge slides/edge rotations on EMSTs. \textbf{\textsf{(a)}} The complete solution space for EMSTs on three vertices, along with an intermediate solution. \textbf{\textsf{(b)}} A partially drawn solution space for EMSTs on four vertices.}
	\label{fig:spanning-tree-space}
\end{figure}

\paragraph{Proving bounds on $\TS$.} First consider only state-aware algorithms. To prove an upper bound on the topological stability ratio, we have to describe a state-aware algorithm that produces a topologically stable solution. If this algorithm computes an $r$-approximation of the optimal solution, then we have found an upper bound of $r$ on $\TS$. While such an algorithm works on time-varying data, we usually consider static inputs that allow multiple optimal solutions, when proving an upper bound. The optimal solution would undergo a discrete change when such an input was encountered during motion. We define a continuous transformation that works for any such static input, and transforms one optimal solution of the static input into another. Note that the continuous transformation should follow the chosen topology $\mathcal{T}_\mathcal{S}$. If during this transformation, the solution is at most a factor $r$ worse than $\OPT$ according to $f$, then we immediately obtain an upper bound of $r$ on $\TS$: an algorithm can wait until a discrete change happens and then apply the transformation to produce an $r$-approximation. This approximation is topologically stable, since the transformation is continuous, and topological stability does not bound the speed at which the solution can change. Hence, at the point in time where the discrete change would happen, the algorithm may ``freeze time'' and apply the transformation to swap between optimal solutions.

For a lower bound on the topological stability ratio, we should consider a full time-varying input, for which an approximation ratio of $r$ is always necessary. However, we can again use static inputs that allow multiple optimal solutions, to simplify the analysis. We first construct a static input $I$, where every continuous transformation from one optimal solution to another, requires a solution that is at least a factor $r$ worse than $\OPT$. Thus, any algorithm computing a continuous transformation for $I$ produces at least an $r$-approximation. Finally, we find a motion of the input points in which $I$ occurs at some time $t$, and this motion should force the continuous transformation to happen somewhere during the motion. This is achieved by ensuring that keeping the same solution during the complete motion, or attempting a continuous transformation before/after $t$ results in a solution that is even worse. The best any algorithm can do, is transforming exactly at $t$, but this requires a solution that is at least a factor $r$ worse than $\OPT$. Thus every topologically stable algorithm computes at least an $r$-approximation on this time-varying input: $\TS$ is lower bounded by $r$.

Note that since we constructed a state-aware algorithm to prove an upper bound on the topological stability ratio $\TS$, we also prove a bound on the topological stability ratio of clairvoyant algorithms. A clairvoyant algorithm can simply emulate a state-aware algorithm, by not using future time steps. On the other hand, the method we describe to prove a lower bound on $\TS$ for state-aware algorithms, shows that \emph{every} topologically stable algorithm requires an $r$-approximation. Thus we prove a stronger statement than required: even a clairvoyant algorithm cannot do better.

For topologically stable stateless algorithms, which essentially are continuous functions between input space $\mathcal{I}$ and solution space $\mathcal{S}$, we take a different approach. For stateless algorithms we can analyze (an upper bound on) the approximation ratio of such a continuous function, for example by finding the function. However, usually we are interested in lower bounds, which motivate the usage of state-aware algorithms. A lower bound can be constructed by considering a continuous part of the input space and the possible mappings between those input instances and their solutions. We then prove a lower bound of $r$ on $\TS$ by analyzing the function that produces solutions according to this mapping. In particular, we prove a statement such as the following: if such a function maps a continuous (part of) the input to a continuous (part of) the solution space, then the produced solution must be at least an $r$-approximation of a respective optimal solution, or equivalently, if such a function produces a better solution than an $r$-approximation, then the mapping cannot be continuous.

\subsection{Topological stability of state-aware algorithms for EMSTs}\label{sec:topoEMST}
We use the same setting of the kinetic EMST problem as in Section~\ref{sec:eventEMST}, except that we do not restrict the trajectories of the points and we do not normalize the coordinates. We merely require that the trajectories are continuous. To define this properly, we need to define a topology on the input space, but for a kinetic point set with $n$ points in $d$ dimensions we can simply use the standard topology on $\reals^{dn}$ as $\mathcal{T}_\mathcal{I}$. To apply topological stability analysis, we also need to specify a topology on the (discrete) solution space. As the points move, the minimum spanning tree may have to change at some point in time by removing one edge and inserting another edge, which is called an \emph{edge flip}. From a practical point of view, we do not consider this operation to be stable, since the updated edge can be reinserted anywhere in the tree. From a theoretical point of view, the operation is also trivially stable; see below. Instead we define the topology of $\mathcal{S}$ using a flip graph, where the operations are either \emph{edge slides} or \emph{edge rotations}~\cite{aichholzer2002sequences,goddard1996distances,DBLP:conf/latin/NicholsPTZ18} (see Figure~\ref{fig:spanning-tree-space}). The optimization function $f$, measuring the quality of the EMST, is naturally defined for the vertices of the flip graph as the length of the spanning tree, and we use linear interpolation to define $f$ on the edges of the flip graph. For all three operations, we provide upper and lower bounds on $\TS(\text{EMST}, \mathcal{T}_\mathcal{I}, \mathcal{T}_\mathcal{S})$.

\paragraph{Edge flips.} One of the most general operations to update a spanning tree is the edge flip: one edge is replaced by another edge, that has one or both of its endpoints at different input points. After the operation, all input points must still be connected, to ensure the result is a spanning tree. For most applications that require stability, this operation is not considered stable, since the removed edge and the inserted edge may be far apart. When we use this operation to define the flip graph for the topology~$\mathcal{T_S}$ of~$\mathcal{S}$, and essentially deem this unstable operation to be stable, then the topological stability ratio is (predictably) very low.

\begin{theorem}\label{thm:state-aware-flips}
    For a state-aware algorithm~$\mathcal{A}$ solving the kinetic EMST problem, if $\mathcal{T}_\mathcal{S}$ is defined by edge flips, then $ \TS(\text{EMST}, \mathcal{T}_\mathcal{I}, \mathcal{T}_\mathcal{S}) =1$.
\end{theorem}
\begin{proof}
    Consider a point in time where the EMST has to be updated by removing an edge~$e$ and inserting an edge~$e'$, where $|e|=|e'|$. Observe that this operation is an edge flip and hence there is an edge in the flip graph defining $\mathcal{T}_\mathcal{S}$ connecting the (optimal) EMSTs before and after this operation. Since the optimization function is interpolated over this edge of the flip graph, and both end points have the same (optimal) function value, the optimal solution is also always a topologically stable solution.
\end{proof}

\paragraph{Edge slides.}
An edge slide is defined as the operation of moving one endpoint of an edge to one of its neighboring vertices along the edge to that neighbor. More formally, an edge $(u,v)$ in the tree can be replaced by $(u,w)$ if $w$ is a neighbor of $v$ and $w \neq u$. Note that a tree stays connected after edge slides. Since this operation is very local, we consider it to be stable. However, it more restrictive than edge flips, hence $\TS(\text{EMST}, \mathcal{T}_\mathcal{I}, \mathcal{T}_\mathcal{S})$ will be higher.

\begin{figure}
	\centering
	\includegraphics{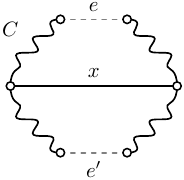}
	\caption{A configuration where $x$ is the longest edge when sliding from $e$ to $e'$.}
	\label{fig:TopStabSlideUB}
\end{figure}

\begin{lemma}\label{lem:slideupper}
	If $\mathcal{T}_\mathcal{S}$ is defined by edge slides, then \mbox{$\TS(\text{EMST}, \mathcal{T}_\mathcal{I}, \mathcal{T}_\mathcal{S}) \leq \frac{3}{2}$}.	
\end{lemma}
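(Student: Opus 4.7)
The plan is to exhibit a topologically stable algorithm $\mathcal{A}$ that outputs the EMST at every moment when it is combinatorially stable, and that transitions between consecutive EMSTs by a sequence of edge slides at each instant when the EMST changes. Between changes $\mathcal{A}$ agrees with $\OPT$, so the approximation ratio is $1$; the analysis therefore reduces to bounding the lengths of the intermediate trees at transition instants.

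At a transition time $t_0$ the EMST switches from $T_1$ to $T_2 = T_1 - e + e'$; by the MST cycle property, both $e = (a,b)$ and $e' = (c,d)$ are longest in the unique cycle $C$ of $T_1 + e'$, and by continuity $|e| = |e'|$ at $t_0$. Let $P$ be the path of $T_1$ joining $c$ and $d$, so that $C = P \cup \{e'\}$ and $e \in P$. The slide sequence walks $e$ along $P$ toward $e'$, one slide at a time (each step slides one endpoint of the current edge to its tree-neighbor further along $P$), finally arriving at $e'$ and thus at $T_2$. Each intermediate tree has the form $T_1 - e + f$ with $f = (u_p, u_q)$ a chord of $C$.

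The key estimate is that $|f|$ is always bounded by half the cycle length: the two paths between $u_p$ and $u_q$ along $C$ together make up all of $C$, so the shorter of them has length at most $L_C/2$, and by the triangle inequality $|f|$ does not exceed this. Since $|e| = |e'|$ at $t_0$, we have $L_C = 2|e| + L_{P\setminus e}$, where $L_{P\setminus e}$ denotes the sum of the lengths of $P$'s edges other than $e$. Therefore every intermediate tree has length
\[
|T_1| - |e| + |f| \;\leq\; |T_1| - |e| + |e| + \tfrac{1}{2} L_{P\setminus e} \;=\; |T_1| + \tfrac{1}{2} L_{P\setminus e} \;\leq\; |T_1| + \tfrac{1}{2}(|T_1| - |e|) \;\leq\; \tfrac{3}{2}|T_1|,
\]
using that $L_{P\setminus e}$ is a sum of edges of $T_1$ distinct from $e$. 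Since the entire slide sequence can be executed in an arbitrarily short time window around $t_0$, and $f$ is linearly interpolated along each flip-graph edge, combined with continuity of $|\OPT(\cdot)|$ this yields $|\mathcal{A}(I(t))| \leq \tfrac{3}{2}|\OPT(I(t))|$ for every $t$, and hence $\TS(\text{EMST}, \mathcal{T}_\mathcal{I}, \mathcal{T}_\mathcal{S}) \leq 3/2$.

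The main obstacle is the $3/2$ constant itself: the naive bound $|f| \leq$ (length of one cycle arc) only gives a ratio of $2$, so the improvement relies on observing that a chord may be bounded by the \emph{shorter} of the two cycle arcs joining its endpoints. Once this halving-the-cycle trick is identified the remainder is routine, including the verification that the walk along $P$ uses only legal slides (each slide moves one endpoint of the current edge to an adjacent tree vertex, and the relevant adjacency remains present throughout the sequence).
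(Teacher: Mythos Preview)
Your proof is correct and follows essentially the same approach as the paper: slide $e$ to $e'$ along the cycle $C$, bound every intermediate chord by half the cycle length via the triangle inequality, and conclude the $3/2$ ratio. The only cosmetic difference is that the paper first reduces to the worst case where the EMST coincides with $C\setminus\{e'\}$, whereas you handle general $T_1$ directly via $L_{P\setminus e}\le |T_1|-|e|$; the underlying estimate $|f|\le L_C/2$ is identical.
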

\begin{proof}
Consider a time where the EMST has to be updated by removing an edge $e$ and inserting an edge $e'$, where $|e| = |e'|$. Note that $e$ and $e'$ form a cycle $C$ with other edges of the EMST. We now slide edge $e$ to edge $e'$ by sliding its endpoints along the edges of $C$. Let $x$ be the longest intermediate edge when sliding from $e$ to $e'$ (see Figure~\ref{fig:TopStabSlideUB}). To allow $x$ to be as long as possible with respect to the length of the EMST, the EMST should be fully contained in $C$. By the triangle inequality we get that $2 |x| \leq |C|$. Since the length of the EMST is $\OPT = |C| - |e|$, we get that $|x| \leq \OPT/2 + |e|/2$. Thus, the length of the intermediate tree is $|C| - 2|e| + |x| = \OPT - |e| + |x| \leq \frac{3}{2}\OPT$.
\end{proof}

\begin{lemma}\label{lem:slidelower}
	If $\mathcal{T}_\mathcal{S}$ is defined by edge slides, then $\TS(\text{EMST}, \mathcal{T}_\mathcal{I}, \mathcal{T}_\mathcal{S}) \geq \frac{\pi+1}{\pi} \approx 1.318$.
\end{lemma}
\begin{proof}
	Consider a point in time where the EMST has to be updated by removing an edge $e$ and inserting an edge $e'$, where $|e|=|e'|$ is very small. Let the remaining points be arranged in a circle with diameter $d$, as shown in Figure~\ref{fig:slidelower}a. Furthermore, let $\OPT$ be the length of the EMST, then we get that $\OPT < d\pi$, since $\OPT$ cannot form a cycle around the circle. Simply using edge slides to move $e$ toward $e'$ will always grow $e$ to be length at least $d - \varepsilon$. We can make this construction for any $\varepsilon > 0$ by using sufficiently many points around the circle and consequently making $e$ and $e'$ arbitrarily short. Alternatively, $e$ can take a shortcut by sliding over another edge $e^*$ as a chord (see Figure~\ref{fig:slidelower}b). Doing so would require $e^*$ to first slide into this position. The shortcut is only beneficial if $|e|+|e^*|< d - \varepsilon$. However, if $e^*$ helps $e$ to avoid becoming a diameter of the circle, then $e$ and $e^*$, as chords, must span an angle larger than $\pi$ together. As a result, for a circle of diameter $d$, we get $|e|+|e^*|\geq d > d - \varepsilon$ by triangle inequality.
	
	\begin{figure}
		\centering
		\includegraphics{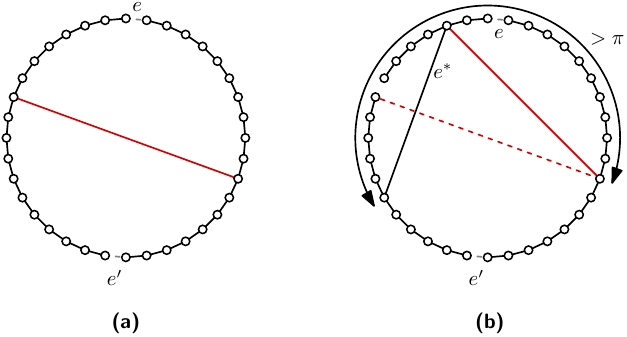}
		\caption{A $(\frac{\pi +1}{\pi} - \varepsilon)$-approximation of the EMST. \textbf{\textsf{(a)}} Edge $e$ slides to $e'$ and becomes the diameter of the circular configuration. \textbf{\textsf{(b)}}  Sliding edge $e^*$ to form a chord creates an even longer spanning tree.}
		\label{fig:slidelower}
	\end{figure}
	
	A motion of the points that forces $e$ to slide to $e'$ in this particular configuration looks as follows. The points start at $e$ and move at constant speed along the circle, half of the points clockwise and the other half counter clockwise. The speeds are assigned in such a way that at some point all points are evenly spread along the circle. Once all points are evenly spread, they start moving towards $e'$, again along the circle. Any edge sliding from $e$ to $e'$ during the motion must have length $d - \varepsilon$ at some point throughout the motion. On the other hand, $\OPT$ is largest when the points are evenly spread along the circle. Let the circle have diameter $d$, then $\OPT$ has length at most $d\pi - |e|$, and equivalently $d \geq \OPT/\pi + |e|$. Since we argued that the sliding edge will always have length $d - \varepsilon$ at some point, the largest intermediate solution has length at most $\OPT - |e| + d - \varepsilon$. Thus, for any small constant $\varepsilon > 0$, we show that $\TS(\text{EMST}, \mathcal{T}_\mathcal{I}, \mathcal{T}_\mathcal{S}) \geq \frac{\OPT - |e| + d - \varepsilon}{\OPT} \geq \frac{\OPT + \OPT/\pi - \varepsilon}{\OPT} \geq \frac{\pi +1}{\pi} - \varepsilon \approx 1.318 - \varepsilon$.
\end{proof}

\begin{theorem}
    For a state-aware algorithm~$\mathcal{A}$ solving the kinetic EMST problem, if $\mathcal{T}_\mathcal{S}$ is defined by edge slides, then $ 1.318\approx\frac{\pi+1}{\pi}\leq \TS(\text{EMST}, \mathcal{T}_\mathcal{I}, \mathcal{T}_\mathcal{S}) \leq \frac{3}{2}$.
\end{theorem}

\paragraph{Edge rotations.}
Edge rotations are a generalization of edge slides, that allow one endpoint of an edge to move to any other vertex. These operations are clearly not as stable as edge slides, but they are still more stable than the deletion and insertion of arbitrary edges. This is also reflected in our bounds on the topological stability ratio for edge rotations.

	\begin{figure}
		\centering
		\includegraphics{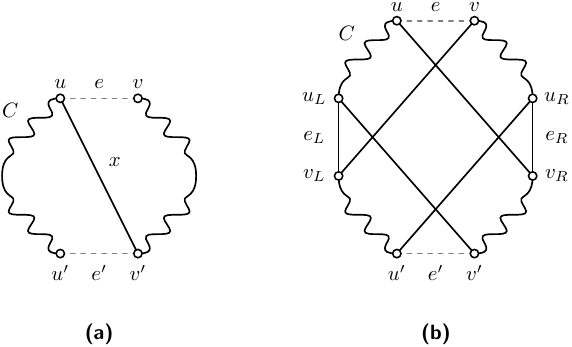}
		\caption{Potential intermediate edges when rotating $e$ to $e'$. \textbf{\textsf{(a)}} If one part of $C$ is small enough, then we can rotate one endpoint of $e$ directly to one endpoint of $e'$. \textbf{\textsf{(b)}} Otherwise, there are multiple options when using two edge rotations.}
		\label{fig:TopStabRotateUB}
	\end{figure}

\begin{lemma}\label{lem:nrotateupper}
	If $\mathcal{T}_\mathcal{S}$ is defined by edge rotations, then \mbox{$\TS(\text{EMST}, \mathcal{T}_\mathcal{I}, \mathcal{T}_\mathcal{S}) \leq \frac{4}{3}$}.	
\end{lemma}
\begin{proof}
	Consider a time where the EMST has to be updated by removing an edge $e = (u, v)$ and inserting an edge $e' = (u', v')$, where $|e| = |e'|$. Note that $e$ and $e'$ form a cycle $C$ with other edges of the EMST. We now rotate edge $e$ to edge $e'$ along some of the vertices of $C$. Let $x$ be the longest intermediate edge when optimally rotating from $e$ to $e'$. To allow $x$ to be as long as possible with respect to the length of the EMST, the EMST should be fully contained in $C$. We argue that $|x| \leq \OPT/3 + |e|$, where $\OPT$ is the length of the EMST. Removing $e$ and $e'$ from $C$ splits $C$ into two parts, where we assume that $u$ and $u'$ ($v$ and $v'$) are in the left (right) part. First assume without loss of generality that the left part has length at most $\OPT/3$. Then we can rotate $e$ to $(u, v')$, and then to $e'$, which implies that $|x| = |(u, v')| \leq \OPT/3 + |e|$ by the triangle inequality (see Figure~\ref{fig:TopStabRotateUB}a).
	
	Now assume that both parts have length at least $\OPT/3$. Let $e_L = (u_L, v_L)$ be the edge in the left part that contains the midpoint of that part, and let $e_R = (u_R, v_R)$ be the edge in the right part that contains the midpoint of that part, where $u_L$ and $u_R$ are closest to $e$ (see Figure~\ref{fig:TopStabRotateUB}b). Furthermore, let $Z$ be the length of $C\setminus\{e, e', e_L, e_R\}$. Now consider the potential edges $(u, v_R)$, $(v, v_L)$, $(u', u_R)$, and $(v', u_L)$. By the triangle inequality, the sum of the lengths of these edges is at most $4|e| + 2|e_L| + 2|e_R| + Z$. Thus, one of these potential edges has length at most $|e| + |e_L|/2 + |e_R|/2 + Z/4$. Without loss of generality let $(u, v_R)$ be that edge (the construction is fully symmetric). We can now rotate $e$ to $(u, v_R)$, then to $(u', v_R)$, and finally to $e'$. As each part of $C$ has length at most $2\OPT/3$, we get that $|(u', v_R)| \leq \OPT/3 + |e|$ by construction. Furthermore we have that $\OPT = |e| + |e_L| + |e_R| + Z$. Thus, $|(u, v_R)| \leq |e| + |e_L|/2 + |e_R|/2 + Z/4 = \OPT/3 + 2|e|/3 + |e_L|/6 + |e_R|/6 - Z/12$. Since $e$ needs to be removed to update the EMST, it must be the longest edge in $C$. Therefore $|(u, v_R)| \leq \OPT/3 + |e|$, which shows that $|x| \leq \OPT/3 + |e|$. Since the length of the intermediate tree is $\OPT - |e| + |x| \leq \frac{4}{3} \OPT$, we obtain that $\TS(\text{EMST}, \mathcal{T}_\mathcal{I}, \mathcal{T}_\mathcal{S}) \leq \frac{4}{3}$.
\end{proof}

\begin{lemma}\label{lem:nrotatelower}
	If $\mathcal{T}_\mathcal{S}$ is defined by edge rotations, then $\TS(\text{EMST}, \mathcal{T}_\mathcal{I}, \mathcal{T}_\mathcal{S}) \geq \frac{10-2\sqrt{2}}{9-2\sqrt{2}} \approx 1.162$.
\end{lemma}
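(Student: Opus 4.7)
The overall strategy is to mimic the construction of Lemma~\ref{lem:slidelower}: pick two very short edges $e$ and $e'$ that appear in the EMST at the beginning and end of the motion respectively, arrange the remaining points on a convex curve (a circle works, possibly with a small perturbation), and use a motion in which the points start clustered near $e$, spread out evenly along the curve, and then recluster near $e'$. The required EMSTs at the start and end of the motion are ``polygon minus one edge'' trees with the missing edge at $e$ and $e'$, respectively, which are far apart in the rotation flip graph. At the spread-out moment the value of $\OPT$ is maximized and approaches the circumference of the curve, while at the endpoints of the motion $\OPT$ is tiny; as in Lemma~\ref{lem:slidelower}, the bound will be obtained by comparing against this maximum.

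The single-step analysis essentially inverts the argument of Lemma~\ref{lem:nrotateupper}. That proof bounds the length of the worst intermediate edge by $\OPT/3 + |e|$ via triangle inequalities on the four candidate edges $(u,v_R)$, $(v,v_L)$, $(u',u_R)$, $(v',u_L)$, combined with the fact that $|e|$ dominates every other edge of the cycle $C$. My plan is to build a configuration that simultaneously makes each of these triangle inequalities as loose as possible, which geometrically pins down the positions of the midpoints of the two halves of $C$. Rewriting $\frac{10-2\sqrt{2}}{9-2\sqrt{2}} = \frac{82+2\sqrt{2}}{73}$ shows that the claimed lower bound is equivalent to forcing an intermediate edge of length at least $\OPT/(9-2\sqrt{2})$, and the appearance of $\sqrt{2}$ strongly suggests the extremal geometry involves an angle of $\pi/4$ or $3\pi/4$ where a chord meets an arc of the circle; I would parameterize by this angle and optimize.

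The main obstacle is upgrading this single-rotation bound to a bound on arbitrary rotation sequences: in principle an algorithm can perform many small rotations, each individually cheap, and amortize the cost of the transition. I would address this by a continuity/intermediate-value argument paralleling the one implicit in Lemma~\ref{lem:slidelower}: as the motion proceeds from clustered at $e$ to clustered at $e'$, the combinatorial type of the tree maintained by the algorithm must cross a specific boundary in the rotation flip graph, and at any such crossing the single-step bound applies to the tree at that moment. Making this global argument fully rigorous, and ensuring that the worst-case geometry is actually realized at some time along the motion (after the $\varepsilon \to 0$ limit), is the step that will require the most care.
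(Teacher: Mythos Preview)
Your high-level plan (force a transition between two EMSTs and lower-bound the cost at the moment of transition) matches the paper, but the concrete ingredients you propose do not lead to the stated bound, and the key idea that handles arbitrary rotation sequences is missing.

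First, the configuration. The paper does \emph{not} use a circle; it places the points on the boundary of a diamond (a square rotated by $45^\circ$) of side length~$2$, with $e$ and $e'$ of length~$1$ sitting symmetrically near the top and bottom corners. The number $9-2\sqrt{2}$ is exactly the length of the boundary path through $e$ (four segments of length $2-\tfrac{1}{2}\sqrt{2}$ plus $|e|=1$); the $\sqrt{2}$ comes from this diamond geometry, not from an optimized chord angle on a circle. Your plan to ``invert'' the triangle-inequality computation of Lemma~\ref{lem:nrotateupper} is aimed at the wrong target: making those four inequalities simultaneously tight would push you toward the $4/3$ upper bound, not toward $\tfrac{10-2\sqrt{2}}{9-2\sqrt{2}}\approx 1.162$. (Relatedly, your arithmetic is slightly off: the excess over $\OPT$ is $\OPT/(9-2\sqrt{2})=1$ here, and it is \emph{not} realized by a single ``intermediate edge'' in the sense of the upper-bound proof.)

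Second, and more importantly, the global argument. You correctly identify the obstacle---an algorithm may perform an arbitrary sequence of rotations---but your proposed fix (``cross a specific boundary in the flip graph and apply the single-step bound there'') is exactly the step that needs a concrete invariant, and you do not supply one. The paper's invariant is geometric rather than combinatorial: classify any edge that crosses the vertical diagonal of the diamond as a \emph{top-connector} (entirely above the horizontal diagonal), a \emph{bottom-connector} (entirely below), or a \emph{cross-connector} (meets both diagonals). Every spanning tree must contain at least one connector; a top- or bottom-connector has length $\geq 1$, and a cross-connector has length $\geq 2$. One rotation cannot turn a top-connector directly into a bottom-connector (one endpoint stays put, above the horizontal diagonal), so somewhere along \emph{any} rotation sequence the tree contains either a cross-connector or both a top- and a bottom-connector; in either case the tree has length $\geq 10-2\sqrt{2}$. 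This connector trichotomy is the idea you are missing, and without it the ``boundary crossing'' step in your outline has no content.
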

\begin{proof}
	Consider a point in time where the EMST has to be updated by removing an edge $e$ and inserting an edge $e'$. Let the remaining points be arranged in a diamond shape as shown in Figure~\ref{fig:nrotatelower}, where the side length of the diamond is $2$, and $|e| = |e'| = 1$. As a result, the distance between an endpoint of $e$ and the left or right corner of the diamond is $2 - \frac{1}{2}\sqrt{2}$. Now we define a \emph{top-connector} as an edge that intersects the vertical diagonal of the diamond, but is completely above the horizontal diagonal of the diamond. A \emph{bottom-connector} is defined analogously, but must be completely below the horizontal diagonal. Finally, a \emph{cross-connector} is an edge that crosses or touches both diagonals of the diamond. Note that a cross-connector has length at least $2$, and a top- or bottom-connector has length at least $|e| = 1$. In the considered update, we start with a top-connector and end with a bottom-connector. Since we cannot rotate from a top-connector to a bottom-connector in one step, we must reach a state that either has both a top-connector and a bottom-connector, or a single cross-connector. In both options the length of the spanning tree is equal to four times $2 - \frac{1}{2}\sqrt{2}$, plus the connector(s) of length at least $2$. This gives a total length of $10 - 2 \sqrt{2}$, while the minimum spanning tree has no cross-connector, but a single top- or bottom-connector and hence a total length of $9 - 2 \sqrt{2}$.

    \begin{figure}
	   \centering
	   \includegraphics{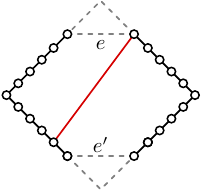}
	   \caption{Lower bound construction for edge rotations.}
	   \label{fig:nrotatelower}
    \end{figure}
	
	To force the update from $e$ to $e'$ in this configuration, we can use the following motion. The points start at the endpoints of $e$ and move with constant speeds to a position where the points are evenly spread around the left and right corner of the diamond. Then the points move with constant speeds to the endpoints of $e'$. The argument above still implies that we need edges of total length at least $2$ intersecting the vertical diagonal of the diamond at some point during the motion. On the other hand, $\OPT \leq 9 - 2 \sqrt{2}$ throughout the motion. Thus $\TS(\text{EMST}, \mathcal{T}_\mathcal{I}, \mathcal{T}_\mathcal{S}) \geq \frac{10-2\sqrt{2}}{9-2\sqrt{2}} \approx 1.162$.
\end{proof}

\begin{theorem}
    For a state-aware algorithm~$\mathcal{A}$ solving the kinetic EMST problem, if $\mathcal{T}_\mathcal{S}$ is defined by edge rotations, then $1.162 \approx \frac{10-2\sqrt{2}}{9-2\sqrt{2}} \leq \TS(\text{EMST}, \mathcal{T}_\mathcal{I}, \mathcal{T}_\mathcal{S}) \leq~\frac{4}{3}$.
\end{theorem}

\subsection{Topological stability of stateless algorithms for EMSTs}\label{sec:topoEMST-stateless}
As we explained in Section~\ref{sec:topostable-analysis}, for stateless algorithms we can prove a lower bound on $\TS$ using the fact that a stateless algorithm is a continuous function. Let $\mathcal{A} \colon \mathcal{I} \rightarrow \mathcal{S}$ be a function between input space~$\mathcal{I}$ and solution space~$\mathcal{S}$. With slight abuse of notation, we often use a set, for example~$D\subseteq\mathcal{I}$ as input for $\mathcal{A}$ to talk about the set~$\mathcal{A}D$ of solutions that algorithm~$\mathcal{A}$ maps the inputs in~$D$ to. Specifically, omitting the brackets ensures that this notation will later correspond to function composition, when we consider (parameterized) paths in~$\mathcal{I}$ as input. In this section, we show that such a function $\mathcal{A}$, that computes a topologically stable (approximation of an) EMST on $n$ moving points, produces at least a $\sqrt{n}$-approximation, if $\mathcal{A}$ is a continuous function. In other words, we prove that $\TS$ is at least $\Omega(\sqrt{n})$. We do so for a topology~$\mathcal{T_S}$ of $\mathcal{S}$ defined by edge flips: any (combinatorial) edge may be replaced by any other edge, as long as all input points are still connected after the flip. This result heavily contrasts our finding for state-aware algorithms: when $\mathcal{T_S}$ is defined by edge flips, $\TS$ is equal to~$1$ for state-aware algorithms, as proved in Theorem~\ref{thm:state-aware-flips}. Besides edge flips, we also shortly consider other operations to update spanning trees at the end of this section. 

%Note that a flip graph defined for edge slides or edge rotations, which we considered earlier, has only a subset of the edges of the flip graph~$G$ for edge flips. Our lower bound in this section extends to all solution spaces defined by connected subgraphs of~$G$, since the topological stability ratio~$\TS$ cannot decrease by removing edges from the flip graph defining the solution space~$\mathcal{S}$.

\begin{theorem}\label{thm:topological-stateless-LB}
    For a stateless algorithm~$\mathcal{A}$ solving the kinetic EMST problem on $n$ (moving) input points, if $\mathcal{T}_\mathcal{S}$ is defined by edge flips, then $\TS(\text{EMST}, \mathcal{T}_\mathcal{I}, \mathcal{T}_\mathcal{S}) = \Omega(\sqrt{n})$.
\end{theorem}

\noindent The remainder of this section works towards proving this theorem. Intuitively, we use topological arguments to show that the continuity of~$\mathcal{A}$ together with the topology of the input and solution spaces, prevents function~$\mathcal{A}$ from being injective. We carefully choose the geometry of the input instances we consider, such that any spanning tree chosen by~$\mathcal{A}$ for the inputs that violate injectivity, results in an $\Omega(\sqrt{n})$ approximation factor.

More specifically, the proof is structured as follows. We assume that~$\mathcal{A}$ is a continuous function and consider a part~$D\subseteq \mathcal{I}$ of the input space that is homeomorphic to a disk. We show that the boundary $\delta D$ of this set $D$ must map to a part of $\mathcal{S}$ that cannot have holes, since~$\mathcal{A}$ is continuous. More precisely, the image of $\delta D$ under $\mathcal{A}$ cannot contain cycles of the flip graph~$\mathcal{S}$ and hence $\delta D$ maps to a tree-like part of~$\mathcal{S}$. Thus, different parts of~$\delta D$ map to the exact same vertex of~$\mathcal{S}$. For those inputs we show an approximation factor of~$\Omega(\sqrt{n})$.

For ease of explanation, we consider inputs on point set~$P = \{p_1, p_2, \ldots, p_{2n}\}$ consisting of $2n$ moving points in $1$-dimensional Euclidean space. This affects our computations only by constant factors compared to inputs with $n$ points, and hence we still work towards proving Theorem~\ref{thm:topological-stateless-LB}. This also means that we use the standard topology on $\reals^{2n}$ as $\mathcal{T_I}$ and $\mathcal{T_S}$ is defined by edge flips, albeit for spanning trees on $2n$ points. However, as will become clear throughout this section, the construction can be embedded in any Euclidean space of higher dimension, and hence the results hold for $d$ dimensions.

\paragraph{Constructing the inputs.} We consider a restricted set of inputs, in which the points in~$P$ lie on a single line, and form two sets~$P_1 = \{p_1, \ldots, p_n\}$ and $P_2 = \{p_{n+1}, \ldots, p_{2n}\}$ of $n$ points each. Figure~\ref{fig:topological-stateless-instance} provides illustrations for the upcoming paragraphs. %The points in $P_1$ are located at least a distance $n$ from each point in $P_2$.

\begin{figure}
	\centering
	\includegraphics{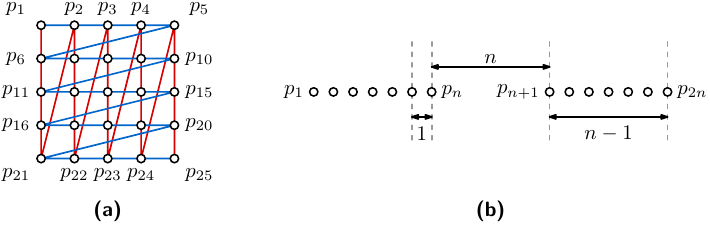}
	\caption{Construction in instances for Theorem~\ref{thm:topological-stateless-LB}: \textbf{\textsf{(a)}} The orderings of the two permutations $\Pi$ and $\Pi'$ on a small point set with $n=25$ (in blue and red, respectively). \textbf{\textsf{(b)}} A concrete input instance with point sets $P_1$ and $P_2$ uniformly spaced, and distance $n$ between the sets.}
	\label{fig:topological-stateless-instance}
\end{figure}

We are going to consider two permutations $\Pi$ and $\Pi'$ of both $P_1$ and $P_2$; either point set can be in either permutation. Let $p_i,\ldots, p_{i+n-1}$ be the points in~$P_1$ ($i=1)$ or $P_2$ ($i=n+1$). Permutation $\Pi$ simply orders the points on their index. Permutation $\Pi'$ also starts at $p_i$ but consists of $\sqrt{n}$ sequences of length $\sqrt{n}$ that skip $\sqrt{n}$ indices for consecutive points and consecutive sequences shift all indices by~1: 

\begin{align*}
    &p_i, p_{i+\sqrt{n}}, p_{i+2\sqrt{n}}, \ldots, p_{i+n-\sqrt{n}},\\ 
    &p_{i+1}, p_{i+1+\sqrt{n}}, p_{i+1+2\sqrt{n}}, \ldots, p_{i+1+n-\sqrt{n}},\\ 
    &\ldots\\
    &p_{i+\sqrt{n}-1}, p_{i+2\sqrt{n}-1}, p_{i+3\sqrt{n}-1}, \ldots,p_{i+n-1}
\end{align*}

Intuitively, $\Pi$ and $\Pi'$ correspond to a row-by-row and column-by-column traversal in a $\sqrt{n} \times \sqrt{n}$ grid; see Figure~\ref{fig:topological-stateless-instance}a for a visual interpretation. Observe that both permutations have $p_i$ as the first point and $p_{i+n-1}$ as the last point; this will be helpful later.

To create concrete input instances, we consider combinations $X \Join Y$ of permutations $X$ and $Y$ of the subsets $P_1$ and $P_2$, respectively: For such an instance $X \Join Y$, the points in $P_1$ are ordered according to~$X$, and consecutive points are located at distance~$1$ from one another. Similarly, the points in $P_2$ are ordered according to~$Y$, and consecutive points are also at distance~$1$. The points of~$P_1$ are placed before the points on $P_2$ and the (last) point~$p_n\in P_1$ is at distance~$n$ from the (first) point~$p_{n+1}\in P_2$. See Figure~\ref{fig:topological-stateless-instance}b.

We are now ready to construct our set~$D\subseteq \mathcal{I}$ of inputs; see Figure~\ref{fig:topological-stateless-inputspace}a. First, we define the instances on the boundary~$\delta D$ of~$D$: Consider the four instances $I_a = \Pi \Join \Pi$, $I_b = \Pi \Join \Pi'$, $I_c = \Pi' \Join \Pi'$, and $I_d = \Pi' \Join \Pi$. The boundary $\delta D$ consists of four paths $\delta_{ab},\delta_{bc},\delta_{cd}$ and $\delta_{da}$ through $\mathcal{I}$. The path~$\delta_{ab}$ has its endpoints at $I_a$ and $I_b$, and is otherwise defined by a linear interpolation of the positions of points in $P_2$. Paths $\delta_{bc},\delta_{cd}$ and $\delta_{da}$ are defined analogously: they respectively have endpoints in $I_b$ and $I_c$, $I_c$ and $I_d$, and $I_d$ and $I_a$, and interpolate positions of only $P_1$, only $P_2$, and only $P_1$, respectively.

Observe that the boundary~$\delta D$ basically forms a square in~$\mathcal{I}$. Any point inside the square defines an instance in which the position of the points in $P_1$ and $P_2$ is determined by the interpolated state between $\Pi$ and $\Pi'$: for $P_1$ this state can be found by (orthogonal) projection to $\delta_{bc}$ or $\delta_{da}$ and for $P_2$ by projection to $\delta_{ab}$ or $\delta_{cd}$. These are all instances that make up~$D$. Notice that $D$ forms a compact (that is, closed and bounded) subset of~$\reals^{2dn}$. 

\begin{figure}
	\centering
	\includegraphics{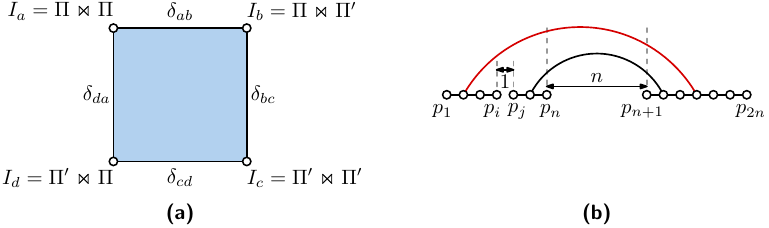}
	\caption{\textbf{\textsf{(a)}} The set of inputs~$D\subseteq\mathcal{I}$. \textbf{\textsf{(b)}} An instance $I_a,I_b,I_c$ or $I_d$ with multiple edges between $P_1$ and $P_2$. These edges are drawn as arcs for clarity, but are straight lines in a geometric spanning tree. Replace the red edge by $(p_i,p_j)$ to shorten the spanning tree.}
	\label{fig:topological-stateless-inputspace}
\end{figure}

We now prove two crucial lemmata on the length of a spanning tree on instances $I_a,I_b,I_c$ and $I_d$, and on just $P_1$ or $P_2$, if they are positioned according to $\Pi$ or $\Pi'$.

\begin{lemma}\label{lem:topological-stateless-crossedge}
    For instances $I_a, I_b, I_c$ and $I_d$, any spanning tree~$T$ that has more than one edge $(p,p')$ with $p\in P_1$ and $p'\in P_2$, can be turned into a spanning tree~$T'$ with only one such edge, such that the total edge length of~$T'$ is smaller than the total edge length of~$T$.
\end{lemma}
\begin{proof}
    Consider such a spanning tree~$T$ (see Figure~\ref{fig:topological-stateless-inputspace}b) and remove one of the edges between~$P_1$ and~$P_2$, which has length at least~$n$.
    % (remove the longest edge if such an edge exists). %% I don't see why this matters
    As a result, the edges no longer form a spanning tree and there are two connected components. 
    There must be a pair of consecutive points that belong to different components, and that both lie in $P_1$ or both in $P_2$; if both sets each form a single component, then we removed the only edge between $P_1$ and $P_2$.
    By adding the edge of length 1 between these two consecutive points, we construct a spanning tree $T'$ of length strictly smaller than $T$.
    We can repeat this modification until only a single edge between $P_1$ and $P_2$ remains, proving the statement.
    %
    %% OLD PROOF
    %Observe that the points in~$P_1$ or the points in~$P_2$ (or both) are separated over the two components. 
    %
    %Assume without loss of generality that the points in~$P_1$ are separated over different connected components. There must be a pair $p_i,p_j$ of consecutive points in~$P_1$ such that $p_i$ is in one components and $p_j$ in the other component; if not, then all points are in the same component. By adding the edge $(p_i,p_j)$, of length $1$, we reconnect the components and have constructed a spanning tree~$T'$ with smaller total length than~$T$. We can repeat this modification until only a single edge between $P_1$ and $P_2$ remains, proving the statement.
\end{proof}

\begin{lemma}\label{lem:topological-stateless-permutations}
   For a set~$P$ of $n$ uniformly-spaced collinear points, and an arbitrary (combinatorial) spanning tree~$T$ on $P$, the length of $T$ is an $\Omega(\sqrt{n})$-approximation of the length of an EMST, when the permutation of the points in $P$ is $\Pi$ or $\Pi'$.
\end{lemma}
\begin{proof}
    Consider an arbitrary spanning tree~$T$ on such a set of $n$ points. We derive a lower bound on the length of $T$ when the points are permuted according to $\Pi$ or $\Pi'$. %For each edge of $T$ we consider its length when the points follow one of the two permutations. We show below that for one of the two permutations, the length is at least $\Omega(\sqrt{n})$. Therefore, every edge has length $\Omega(\sqrt{n})$ for either~$\Pi$ or $\Pi'$. 
%
%   To prove this bound, we first prove that every edge has length $\Omega(\sqrt{n})$ in $\Pi$ or in $\Pi'$.
    
    We show that, for $\Pi$ or $\Pi'$, at least half of the edges of $T$ are of length $\Omega(\sqrt{n})$, and hence $T$ has a total length of at least~$\Omega(n\sqrt{n})$ for that permutation. Observe that an EMST connects the points in the order in which they occur on the line, leading to a length of $\OPT = n-1$. As such, the length of $T$ is $\Omega(\sqrt{n})\cdot \OPT$ for one of the two permutations. We complete the proof by proving that an arbitrary edge of~$T$ has length $\Omega(\sqrt{n})$ for $\Pi$ or $\Pi'$.

    Let $e=(p_i,p_j)$ be an arbitrary edge of~$T$, with $i<j$. If $j-i\geq\sqrt{n}$ then for $\Pi$ points $p_i$ and $p_j$ are at least a distance $\sqrt{n}$ apart: the points are in order of their index and uniformly distributed over a length-$n$ line piece, putting consecutive points at distance $1$. 
    
    If $j-i<\sqrt{n}$, consider the distance between $p_i$ and $p_j$ for $\Pi'$. Remember that~$\Pi'$ consists of $\sqrt{n}$ sequences of length~$\sqrt{n}$ in which $\sqrt{n}$ indices are skipped for consecutive points. Therefore, $p_i$ and $p_j$ must be part of different sequences. Furthermore, if $p_i$ and $p_j$ are not part of consecutive sequences, there are at least $\sqrt{n}$ points between them in the sequence, and hence their distance is at least $\sqrt{n}$. Thus, finally, consider the case where $p_i$ and $p_j$ are part of consecutive sequences of $\Pi'$. Then either $j = i+1$ and $p_i$ and $p_j$ are exactly at distance $\sqrt{n}$, or $j=i-1+\sqrt{n}$ and $p_i$ and $p_j$ are exactly at distance $\sqrt{n}-1 = \Omega(\sqrt{n})$.
\end{proof}

\paragraph{Mapping to the flip graph.}
Given our input instances~$D\subseteq \mathcal{I}$ we can now investigate their mapping by~$\mathcal{A}$ to the flip graph defining~$\mathcal{S}$. We start considering specific instances in~$\delta D$.%We start by proving that Theorem~\ref{thm:topological-stateless-LB} holds when~$\mathcal{A}$ violates injectivity for any pair of instances out of $I_a, I_b, I_c$ and $I_d$.

\begin{lemma}\label{lem:topological-stateless-approx}
    Let $I,I'\in \delta D$, with $I\neq I'$, be two instances for which $P_1$ or $P_2$ are positioned as follows: the points are collinear, have unit distance between consecutive pairs, and are ordered according to $\Pi$ in $I$ and according to $\Pi'$ in $I'$. If $I$ and $I'$ are mapped by~$\mathcal{A}$ such that $\mathcal{A}(I)=\mathcal{A}(I')$, then $\mathcal{A}(I)$ or $\mathcal{A}(I')$ is an $\Omega(\sqrt{n})$-approximation of the EMST.
\end{lemma}
\begin{proof}
    First consider the spanning trees $\mathcal{A}(I)$ and $\mathcal{A}(I')$. These are spanning trees on $2n$ points, divided over two $n$-point subsets $P_1$ and $P_2$. By Lemma~\ref{lem:topological-stateless-crossedge} we know that for a spanning tree with more than one edge between $P_1$ and $P_2$ there is a tree with only a single such edge, that has smaller total edge length. Since we are proving a lower bound on the total edge length, we may assume that $\mathcal{A}(I)$ and $\mathcal{A}(I')$ have been modified to trees with only a single edge between~$P_1$ and $P_2$, which can only shorten their total edge length.
    
    Assume without loss of generality that the conditions in the lemma hold for~$P_1$. Additionally, as there is only a single edge between $P_1$ and $P_2$, the spanning trees $\mathcal{A}(I)$ and $\mathcal{A}(I')$ consist of a spanning tree on $P_1$ and a spanning tree on $P_2$ that are connected by an edge. As such, we can apply Lemma~\ref{lem:topological-stateless-permutations} to~$P_1$: since $\mathcal{A}(I)=\mathcal{A}(I')$, the same spanning tree is used for both instances, while the permutation differs for $P_1$. We apply Lemma~\ref{lem:topological-stateless-permutations} to obtain that the subtree on $P_1$ is an $\Omega(\sqrt{n})$-approximation of the EMST on $P_1$ for $\mathcal{A}(I)$ or $\mathcal{A}(I')$.

    To finish the proof, we show that even if the remainders of $\mathcal{A}(I)$ and $\mathcal{A}(I')$ are as short as possible, one of them (in its entirety) is still a $\Omega(\sqrt{n})$-approximation. First, the edge between $P_1$ and $P_2$ has length at least $n$. Second, note that $I,I'\in \delta D$ and on the paths $\delta_{ab},\ldots,\delta_{da}$ we always linearly interpolate $P_1$ or $P_2$ between uniformly spaced orderings according to $\Pi$ or $\Pi'$. Since both orderings start and end with the same point, the outer points in $P_2$ are always at a distance of $n-1$. Thus the spanning tree on $P_2$ (and also $P_1$) has length at least $n-1$ and the total length of $\mathcal{A}(I)$ or $\mathcal{A}(I')$ is at least $2n-2 +c\cdot n\sqrt{n}$, for some constant~$c>0$. Similarly, an EMST on $I$ or $I'$ has length at most $2n-2+n=3n-2$, resulting in an approximation factor of
    \begin{equation*}
        \frac{2n-1 + c\cdot n\sqrt{n}}{3n-2} \geq \frac{2n -1 + c\cdot n\sqrt{n}}{3n} = \frac{2}{3} - \frac{1}{3n}+ \frac{c\cdot \sqrt{n}}{3} \geq x \sqrt{n} \quad\text{ for } x= \frac{c}{3} \text{ \& } n\geq \frac{1}{2}.
    \end{equation*}
    Thus, $\mathcal{A}(I)$ or $\mathcal{A}(I')$ is an $\Omega(\sqrt{n})$-approximation of the EMST.
\end{proof}

As Lemma~\ref{lem:topological-stateless-approx} applies to any pair out of $I_a, I_b, I_c$ and $I_d$, we get the following corollary.

\begin{corollary}\label{cor:topological-stateless-boundary}
    If instances $I,I'\in \{I_a, I_b, I_c,I_d\}$, with $I\neq I'$, are mapped by~$\mathcal{A}$ such that $\mathcal{A}(I)=\mathcal{A}(I')$, then $\mathcal{A}(I)$ or $\mathcal{A}(I')$ is an $\Omega(\sqrt{n})$-approximation of the EMST.
\end{corollary}

Finally, we consider the whole boundary~$\delta D$ and prove that either $\mathcal{A}$ violates injectivity for parts of $\delta D$, or $\mathcal{A}$ maps $\delta D$ around a hole in~$\mathcal{S}$ (a cycle in the flip graph), in which case we find a contradiction with $\mathcal{A}$ being a continuous function. For the latter case of this proof, we consider the \emph{fundamental group} of~$\mathcal{S}$. While the fundamental group of a graph is a basic concept in computational topology,for completeness, we introduce the essential concepts of this fundamental group in the next few paragraphs. For additional background information see an algebraic topology textbook, for example Chapter 1 of~\cite{hatcher2002algebraic}.

\paragraph{Fundamental group of a graph.} The fundamental group is a group --- in group theoretical sense --- of the homotopy equivalence classes of paths starting and ending at a base point~$x_0$ in a topological space; these paths are called \emph{loops} based at~$x_0$. For (path-)connected topological spaces, the fundamental group is not dependent on the base, that is, fundamental groups for different base points are equivalent. Since $\mathcal{T}_\mathcal{S}$ is defined by edge flips, the flip graph defining~$\mathcal{S}$ is connected, and hence we may consider the fundamental group with loops based at an arbitrary point in $s\in\mathcal{S}$. 

We describe loops in terms of the edges that are traversed. In such a description, each edge~$e\in S$ is an interval $\gamma_e \colon [0,1] \rightarrow S$, and when two adjacent edges $e$ and $e'$ are traversed consecutively, we write $\gamma_e \cdot \gamma_{e'}$ to describe the concatenation of the paths over these edges. If an edge~$e$ is traversed in the direction in which the parameterization is descending, then we use $\gamma'_e(t) = \gamma_e(1-t)$ to acknowledge this. Partial traversal of edges, for example traversing edge $e$ up to $t\in[0,1]$ and returning back to the endpoint we started from, can be described as $\gamma_e[0,t]\cdot\gamma'_e[1-t,1]$. We need one additional property of the fundamental group to relate these descriptions of loops to homotopy classes.
    
The fundamental group is a \emph{free group}, and therefore every element of the group can be described as the product ($\cdot$) of so called \emph{generators}. Such a description is called a \emph{word}, that, intuitively, corresponds to a concatenation of paths/loops. For topological spaces based on graphs, the fundamental group and its generators can be defined as follows. Consider~$\mathcal{S}$ as a graph with spanning tree $T\subseteq\mathcal{S}$ on the vertices of~$S$. The edges $\mathcal{S} - T$ not in the spanning tree each form one of the generators of the fundamental group. For each edge $e\in S - T$ we hence have an element $\gamma_e$ (and its inverse $\gamma'_e$) in the fundamental group.
    
It is important that our base point $s\in\mathcal{S}$ lies on the tree $T$ defining the generators, that is, $s\in T$. When $s$ lies on an edge~$e_s$ of $S$, we contract $e_s$ and compute a spanning tree on the resulting graph. We can then set $T$ to be the edges of the computed spanning tree, together with~$e_s$, to ensure that $s\in T$. Notice that there is a unique path through $T$ that connects any two endpoints of generators, or a generator endpoint and base point $s$. Each such path along $T$ can be considered an identity element of the fundamental group. A loop in $\mathcal{S}$ based at $s$ (an element of the fundamental group) can hence be described as an alternating sequence of (possibly empty) paths through $T$ (identity elements) and traversals of edges in $S-T$ (generator elements). %In the next paragraph we elaborate more on the relation between words and loops.

Conversely, a word can be translated to a loop: given a word, start at base point $s\in S$, take the (unique) path along the tree $T$ defining the generators, from $s$ to the endpoint of the first generator edge, traverse the generator edge, follow $T$ to the next generator edge, and so on. After the last generator, the loop is finished by returning to base point $s$ over $T$. Note that for a generator $e$, either $\gamma_e$ or $\gamma'_e$ may occur in word, and in the corresponding loop the edge $e$ is traversed in different directions for $\gamma_e$ and $\gamma'_e$. By definition of a free (fundamental) group, each equivalence class (of homotopic loops based at $s$) can be represented by a unique representative word; intuitively this is a ``shortest'' loop in the homotopy class.

\begin{proof}[Proof of Theorem~\ref{thm:topological-stateless-LB}]
    Consider the boundary~$\delta D$ of our input instances~$D$. This boundary contains instances $I_a, I_b, I_c$ and $I_d$, which $\mathcal{A}$ maps to different solutions in~$\mathcal{S}$ or the theorem holds by Corollary~\ref{cor:topological-stateless-boundary}. Hence, assume that $I_a,I_b,I_c,I_d$ map to different solutions in~$\mathcal{S}$.

    Remember, $\mathcal{T_S}$ is defined by a flip graph, and we consider~$\mathcal{S}$ to be the corresponding simplicial $1$-complex. We make a case distinction on whether certain paths of $\delta D$ are mapped injectively to~$\mathcal{S}$ or not. %In case $\delta D$ is mapped to a tree~$T_\delta\subseteq \mathcal{S}$
    The instances $I_a, I_b, I_c$ and $I_d$ are all mapped to different points in~$\mathcal{S}$ and the (therefore non-empty) paths $\delta_{ab},\delta_{bc},\delta_{cd},\delta_{da}\subseteq \delta D$ connect the respective instances. %via~$T_\delta$. 
    We first consider the cases where $\mathcal{A}\delta_{ab}$ and $\mathcal{A}\delta_{cd}$ share a vertex, and where $\mathcal{A}\delta_{bc}$ and $\mathcal{A}\delta_{da}$ share a vertex. We end with the case where neither of the first two cases apply, and hence the paths form at least a 4-cycle; see Figure~\ref{fig:topological-stateless-mapcases} for a visualization of these cases. 
    
    %This tree~$T_\delta$, corresponding to $\mathcal{A}\delta D$ has the following properties: From each vertex corresponding to $\mathcal{A}(I_a), \mathcal{A}(I_b), \mathcal{A}(I_c)$ or $\mathcal{A}(I_d)$, there is a (possibly empty) path to a vertex~$v_x\in~\mathcal{S}$. Without loss of generality consider $\mathcal{A}(I_a)$. In case the path is empty, we have $v_x=\mathcal{A}(I_a)$. On the path from $\mathcal{A}(I_a)$ to $v_x$ the paths $\delta_{ab}$ towards $\mathcal{A}(I_b)$ and $\delta_{da}$ towards $\mathcal{A}(I_d)$ must be mapped. 
    %Since $\mathcal{A}(I_b)$ and $\mathcal{A}(I_d)$ are mapped to different vertices of the flip graph, $T_\delta$ splits into (at least) two branches at $v_x$; one subtree $T_b\subseteq T_\delta$ containing $\mathcal{A}(I_b)$ and one subtree $T_d\subseteq T_\delta$ containing $\mathcal{A}(I_d)$. We make a case distinction on whether $T_b$ or $T_d$ contains $\mathcal{A}(I_c)$; see Figures~\ref{fig:}a and \ref{fig:}b, respectively. Note that in a degenerate case there may be a separate branch connecting at $v_x$ containing $\mathcal{A}(I_c)$. We can simply consider this branch part of $T_b$ or $T_d$, by considering a subtree rooted at $v_x$ with more than one child; see Figure~\ref{fig:} for an illustration of this case in which $T_b$ is that larger subtree.

    \begin{figure}
        \centering
        \includegraphics{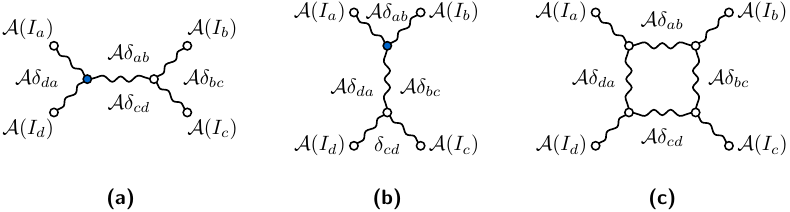}
	   \caption{Cases for the mapping of $\delta D$ produced by~$\mathcal{A}$. Opposite paths share a vertex, for example the blue marked vertex, \textbf{\textsf{(a)}}~on $\mathcal{A}\delta_{ab}$ and $\mathcal{A}\delta_{cd}$, or \textbf{\textsf{(b)}} on $\mathcal{A}\delta_{bc}$ and $\mathcal{A}\delta_{da}$. Otherwise, \textbf{\textsf{(c)}} $\mathcal{A}\delta_{ab} \cap \mathcal{A}\delta_{cd} = \emptyset$ and $\mathcal{A}\delta_{bc} \cap \mathcal{A}\delta_{da} = \emptyset$.}
	   \label{fig:topological-stateless-mapcases}
    \end{figure}
    
    Consider the case where $\mathcal{A}\delta_{ab}$ and $\mathcal{A}\delta_{cd}$ share a vertex; see Figure~\ref{fig:topological-stateless-mapcases}a. 
    % $T_b$ contains $\mathcal{A}(I_c)$. Observe that the path $\delta_{ab}$ is mapped in~$\mathcal{S}$ to a path from $\mathcal{A}(I_a)$ through $v_x$ to $\mathcal{A}(I_b)$ via a path in $T_b$. Similarly, path $\delta_{cd}$ is mapped in~$\mathcal{S}$ to a path from $\mathcal{A}(I_d)$ to $v_x$ via a path in $T_d$ and then proceeds to $\mathcal{A}(I_c)$ via $T_b$. Crucially, both paths coincide in~$v_x$. 
    Remember that the instances on $\delta_{ab}$ require the points in $P_1$ to be ordered according to $\Pi$ while the instances on $\delta_{cd}$ require those points to be ordered according to~$\Pi'$. Additionally, the points in $P_1$ are not interpolating, and hence are uniformly spaced. Since the shared vertex provides a spanning tree for both these orderings, we can apply Lemma~\ref{lem:topological-stateless-approx} to show that an instance on $\delta_{ab}$ or an instance on $\delta_{cd}$ is mapped to an $\Omega(\sqrt{n})$-approximation of the EMST.
    The case in which $\mathcal{A}\delta_{bc}$ and $\mathcal{A}\delta_{da}$ share a vertex (see Figure~\ref{fig:topological-stateless-mapcases}b)
    %$T_d$ contains $\mathcal{A}(I_c)$ 
    is analogous: 
    %We now consider paths $\delta_{bc}$ and $\delta_{da}$, which coincide at~$v_x$ and
    these paths disagree on the permutation of the (static) point set~$P_2$ and again Lemma~\ref{lem:topological-stateless-approx} applies.

    Finally, we consider the case in which both $\mathcal{A}\delta_{ab} \cap \mathcal{A}\delta_{cd} = \emptyset$ and $\mathcal{A}\delta_{bc} \cap \mathcal{A}\delta_{da} = \emptyset$. We use a topological argument to derive a contradiction in this case: We consider two paths $f$ and $g$ starting at some point $u \in \delta_{da}$ and ending at some point $v\in \delta_{bc}$; see Figure~\ref{fig:topological-stateless-homotopicpaths}a. These path go around $D$ on different sides, and we use the fact that $f$ and $g$ are homotopic. Specifically, we define $f,g \colon [0,1] \rightarrow D$ with $f(0)=I_a = u$, and the path defining $f$ then follows $\delta_{ab}$ to $I_b$, and finally reaches $v=I_c$ via $\delta_{bc}$, such that $f(1)=v$. Similarly, $g(0)=I_a =u$, but $g$ follows $\delta_{da}$ to $I_d$, then and reaches $v=I_c$ via $\delta_{cd}$, with $g(1)=v$. Observe that $f$ and $g$ are indeed homotopic in $\mathcal{D}$ relative to the endpoints $u$ and $v$, since we can linearly interpolate between them without leaving~$D$. Thus, since $\mathcal{A}$ is a continuous function, $\mathcal{A}f$ and $\mathcal{A}g$ are homotopic continuous paths in $\mathcal{S}$ relative to endpoints $\mathcal{A}(u)$ and $\mathcal{A}(v)$.

    To eventually derive a contradiction, we consider the fundamental group of~$\mathcal{S}$; specifically the fundamental group with base point $\mathcal{A}(v)$. For our proof we work towards finding representative words for the homotopy classes of loops containing $\mathcal{A}f$ or $\mathcal{A}g$. We now show how to sensibly turn $\mathcal{A}f$ and $\mathcal{A}g$ into loops $\ell_f$ and $\ell_g$, respectively. %, for which we can determine the homotopy class via the fundamental group.
    For technical reasons, we add the edge $e^* = (\mathcal{A}(u),\mathcal{A}(u))$ to $\mathcal{S}$; the exact reason will become clear in the next paragraphs. Observe that, before adding~$e^*$, the tree $T$ defining the generators of the fundamental group is a spanning tree for $\mathcal{S}$, and still is after adding $e^*$. Hence we simply added a generator element to the fundamental group in this way. We now define $\ell_f$ as traversing the path $\textsc{path}_T(\mathcal{A}(v),\mathcal{A}(u))$ along $T$ from $\mathcal{A}(v)$ to $\mathcal{A}(u)$, concatenating this with $\gamma_{e^*}$, and finally traversing $\mathcal{A}f$ (from $\mathcal{A}(u)$ back to $\mathcal{A}(v)$). We define $\ell_g$ analogously. With slight abuse of notation, we interchangeably refer to $\ell_f$ and $\ell_g$ (and other loops) as loops and their words. Our goal is now to find representative words for $\ell_f$ and $\ell_g$.

    Remember that by Corollary~\ref{cor:topological-stateless-boundary}, we have $\mathcal{A}(u) = \mathcal{A}(I_a) \neq \mathcal{A}(I_c)=\mathcal{A}(v)$. Hence, both $\mathcal{A}f$ and $\mathcal{A}g$ must be non-trivial, that is, they are not just a point. We define a deformation retract of both $\mathcal{A}f$ and $\mathcal{A}g$ that removes any ``unnecessary'' spikes from the images of $f$ and $g$: if at any point $(\mathcal{A}f)(x)$, we have that $(\mathcal{A}f)(x-\varepsilon)=(\mathcal{A}f)(x+\varepsilon)$, for arbitrarily small~$\varepsilon$, then we can slide $(\mathcal{A}f)(x)$ in the direction of $(\mathcal{A}f)(x-\varepsilon)$ to shorten the path; see Figure~\ref{fig:topological-stateless-homotopicpaths}b. If this happens on an edge of the tree $T$ defining the fundamental group, then this does not affect the word $\ell_f$. However, when spike removal happens on a generator, this corresponds to (partially) traversing a generator edge $e$, followed by traversing the same edge in opposite direction. Specifically, $\ell_f$ then has a subsequence $...\gamma_e\cdot\gamma'_e ...$ of \emph{adjacent inverses} (or for partial edge traversal a subsequence $...\gamma_e[0,t]\cdot\gamma'_e[1-t,1] ...$) that we would like to remove. The deformation retract defined above hence corresponds to the removal of adjacent inverses from the sequence of generators describing~$\ell_f$. 

    \begin{figure}
        \centering
        \includegraphics[page=1]{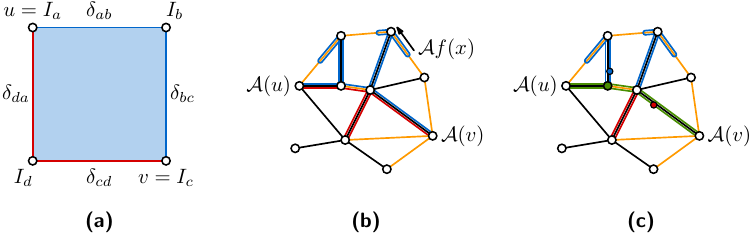}
	   \caption{\textbf{\textsf{(a)}} Homotopic paths $f$ (in blue) and $g$ (in red). \textbf{\textsf{(b)}} The mappings $\mathcal{A}f$ (in blue) and $\mathcal{A}g$ (in red). The edges of the graph $S$ are partitioned into spanning tree $T$ (in black) and generator edges (in yellow). The point $(\mathcal{A}f)(x)$ can be retracted. \textbf{\textsf{(c)}} Base paths $F=G$ are highlighted in green. The blue vertex shows $\mathcal{A}(I_b)$, while red vertex shows $\mathcal{A}(I_d)$. Since $\mathcal{A}(I_b)\notin F$, the vertex on the base path closest to $\mathcal{A}(I_b)$ is drawn in green.}
	   \label{fig:topological-stateless-homotopicpaths}
    \end{figure}
    
    We want to repeatedly remove spikes from $\mathcal{A}f$ and $\mathcal{A}g$ until we find the \emph{base paths} $F,G\colon [0,1] \rightarrow \mathcal{S}$, respectively, that connect $\mathcal{A}(u)$ to $\mathcal{A}(v)$ and cannot be shortened anymore using this operation. Equivalently, we can repeatedly check for adjacent inverses in the part of the words $\ell_f$ and $\ell_g$, that correspond to $\mathcal{A}f$ and $\mathcal{A}g$, until we find the words $\ell_F$ and $\ell_G$, respectively, which correspond to loops that concatenate $\textsc{path}_T(\mathcal{A}(v),\mathcal{A}(u))$ with a traversal of a base path ($F$ and $G$, respectively) from $\mathcal{A}(u)$ to $\mathcal{A}(v)$. Note that adding the edge~$e^* = (\mathcal{A}(u),\mathcal{A}(u))$ was crucial here: we added $e^*$ later, so it is a generator that was not traversed by $\mathcal{A}f$ or $\mathcal{A}g$. The removal of adjacent inverses hence can never remove $\gamma_{e^*}$, and hence $F$ and $G$ must keep $\mathcal{A}(u)$ and $\mathcal{A}(v)$ as their endpoints. Our removal of adjacent inverses corresponds to a deformation retract, which is a homotopy. Since homotopies are equivalence relations, $\mathcal{A}f$, $\mathcal{A}g$, $F$, and $G$ must be homotopic. Similarly, the homotopy classes of loops $\ell_F$ and $\ell_G$ have not changed compared to $\ell_f$ and $\ell_g$, respectively; in fact removing all adjacent inverses results in the representative words of the respective homotopy classes.

    We are now ready to derive a contradiction. We make a case distinction on whether the words $\ell_F$ and $\ell_G$ are exactly the same or not. 
    %If $F$ and $G$ have a non-empty symmetric difference, then traversing $F$ and $G$ from $u$ to $v$ we find subpaths $F'\subseteq F$ and $G'\subseteq G$ that are disjoint, except for their shared endpoints $u',v'\in \mathcal{S}$. Notice that $F'$ and $G'$ form a cycle, which cannot be a subset of $T$ and hence contains at least one edge that is a generator. This edge is part of either $F'$ or $G'$, but not both, and hence this generator element is part of $\ell_F$ or $\ell_G$, but not both. 
    We consider the case where $\ell_F$ and $\ell_G$ have different words first. By definition of the considered fundamental group being a free group, $\ell_F$ and $\ell_G$ must then be in different homotopy classes. Remember that we removed adjacent inverses from $\ell_f$ and $\ell_g$ to obtain $\ell_F$ and $\ell_G$, and that by definition $\ell_f$ and $\ell_g$ traverse our newly added edge~$e^*$ only once. As such, $\gamma'_{e^*}$ cannot occur in either $\ell_f$ or $\ell_g$, so $\gamma_{e^*}$ must still be part of $\ell_F$ and $\ell_G$ after removing adjacent inverses. %The deformation retract corresponding to removing the adjacent inverses from $\ell_f$ and $\ell_g$ hence corresponds to a continuous deformation of $\mathcal{A}f$ to $F$ and of $\mathcal{A}g$ to $G$; all other parts of $\ell_f$ and $\ell_g$ already coincided (the path from $\mathcal{A}(u)$ to $\mathcal{A}(v)$ via $T$), or cannot have changed (edge $e^*$). In particular, 
    As a consequence, the common endpoints $\mathcal{A}(u)$ and $\mathcal{A}(v)$ of both $\mathcal{A}f$ and $\mathcal{A}g$ are still part of $\ell_F$ and $\ell_G$ and hence are crucially still the endpoints of $F$ and $G$. As $\ell_F$ and $\ell_G$ are in different homotopy classes, and they are equivalent except for subpaths $F$ and $G$, respectively, there cannot be a continuous deformation between $F$ and $G$. Thus $F$ and $G$ cannot be homotopic (with respect to endpoints $\mathcal{A}(u)$ and $\mathcal{A}(v)$), leading to a contradiction.

    Let us finally consider the case in which the words $\ell_F$ and $\ell_G$ are exactly the same, and hence $\mathcal{A}f$ and $\mathcal{A}g$ can both retract to the exact same base path $F=G$. Consider the images $\mathcal{A}(I_b) \in \mathcal{A}f$ and $\mathcal{A}(I_d) \in \mathcal{A}g$ and remember that we assumed that $\mathcal{A}\delta_{ab} \cap \mathcal{A}\delta_{cd} = \emptyset$ and $\mathcal{A}\delta_{bc} \cap \mathcal{A}\delta_{da} = \emptyset$. 
    %These images of $\delta_{ab}$ and $\delta_{cd}$ must be continuous subpaths of $\mathcal{A}f$ and $\mathcal{A}g$, respectively. Let $u_1,v_1\in F$ be two points on the base path $F$ at $\mathcal{A}(I_a)$, or as close as possible before, and at $\mathcal{A}(I_b)$, or as close as possible after, respectively
    Additionally, by Corollary~\ref{cor:topological-stateless-boundary} we know that $\mathcal{A}(I_b)\neq\mathcal{A}(I_d)$. Observe that $\mathcal{A}(I_b)$ and $\mathcal{A}(I_d)$ may not necessarily lie on the base path $F=G$, but they can be on branches that are retracted onto the base path. We therefore consider the points $b$ and $d$ on $F=G$ closest to $\mathcal{A}(I_b)$ and $\mathcal{A}(I_d)$, respectively; see Figure~\ref{fig:topological-stateless-homotopicpaths}c. 
    %Similarly, we define $u_2,v_2\in G$ as two points on base path $G$ at $\mathcal{A}(I_d)$, or closest before, and at $\mathcal{A}(I_c)$, or closest after, respectively. Observe that the pairs $u_1,v_1$ and $u_2,v_2$ may either interleave, nest, or occur in sequence.
    We consider the order in which we encounter $b$ and $d$ when traversing the base path from $\mathcal{A}(u)$ to $\mathcal{A}(v)$: either $b$ may occur before $d$, $d$ may occur before $b$, or $b=d$.
    %However, $u_1$ and $v_1$ cannot be interleaved with $u_2$ and $v_2$ along $F=G$, since that would contradict $\mathcal{A}\delta_{ab} \cap \mathcal{A}\delta_{cd} = \emptyset$. 
    Let $b$ occur before $d$, and consider the interval of the base path between $b$ and $d$. The part of $\mathcal{A}f$ that maps to this interval comes after $\mathcal{A}(I_b)$, and thus is part of $\mathcal{A}\delta_{bc}$. Similarly, a part of $\mathcal{A}g$ before $\mathcal{A}(I_d)$ is mapped to this interval, and is hence part of $\mathcal{A}\delta_{da}$. This contradicts our assumption that $\mathcal{A}\delta_{bc} \cap \mathcal{A}\delta_{da} = \emptyset$. Analogously, we get a contradiction with $\mathcal{A}\delta_{ab} \cap \mathcal{A}\delta_{cb} = \emptyset$, when $d$ occurs before $b$.
    %Additionally, $u_1$ and $v_1$ may be nested between $u_2$ and $v_2$, or vice versa, and assume without loss of generality that the former case holds. When this happens, either some part of $\mathcal{A}\delta_{ab}$ is mapped onto $\mathcal{A}\delta_{cd}$ and we again derive a contradiction, or the entirety of $\mathcal{A}\delta_{ab}$ is not mapped to $F$, in particular $\mathcal{A}(I_a)$ and $\mathcal{A}(I_b)$. Since no part of $\delta_{ab}$ is mapped onto $\mathcal{A}\delta_{cd}$, both $\mathcal{A}(I_a)$ and $\mathcal{A}(I_b)$ must be part of the same branch that is retracted to $F$. Hence, a point on $f$ before $I_a$ maps to $u_1$ and a point on $f$ after $I_b$ maps to $v_1$. These points on $f$ are respectively part of $\delta_{da}$ and $\delta_{bc}$, and since $u_1=v_1$ in this case, we get a contradiction with our other assumption that $\mathcal{A}\delta_{bc} \cap \mathcal{A}\delta_{da} = \emptyset$.

    % Thus, when moving along $F=G$ from $\mathcal{A}(u)$ to $\mathcal{A}(v)$, we either first encounter $u_1$ and $v_1$ and then $u_2$ and $v_2$, or vice versa. In the former case, consider the subpath of $F=G$ between $v_1$ and $u_2$. The subpath of $F$ after $v_1$ must correspond to a subset of $\mathcal{A}\delta_{bc}$, while the subpath of $G$ before $u_2$ corresponds to a subset of $\mathcal{A}\delta_{da}$. As such, $\mathcal{A}\delta_{bc} \cap \mathcal{A}\delta_{da} \neq \emptyset$ contradicting our earlier assumption. The latter case is handled analogously by considering the subpath of $F$ before $u_1$ and the subpath of $G$ after $v_2$.
    Lastly, we also derive a contradiction when $b$ and $d$ coincide. Note that this case can occur only when $\mathcal{A}(I_b)$ or $\mathcal{A}(I_d)$ does not lie on the base path, since $\mathcal{A}(I_b)\neq\mathcal{A}(I_d)$. Even when this happens, it holds that $b$ is located at the intersection of $\mathcal{A}\delta_{ab}$ and $\mathcal{A}\delta_{bc}$. To see this, observe that either $b=\mathcal{A}(I_b)$, or $\mathcal{A}f$ deviates from the base path at $b$ before reaching $\mathcal{A}(I_b)$, and returns to $b$ after passing $\mathcal{A}(I_b)$. Similarly, $d$ lies at the intersection of $\mathcal{A}\delta_{cd}$ and $\mathcal{A}\delta_{da}$. Thus, when $b=d$ there is a point where $\mathcal{A}\delta_{ab}$, $\mathcal{A}\delta_{bc}$, $ \mathcal{A}\delta_{cd}$, and $\mathcal{A}\delta_{da}$ intersect, contradicting both $\mathcal{A}\delta_{ab} \cap \mathcal{A}\delta_{cd} = \emptyset$ and $\mathcal{A}\delta_{bc} \cap \mathcal{A}\delta_{da} = \emptyset$.
\end{proof}

Interestingly, the above proof relies entirely on the geometry of the considered input instances and on the fact that the topology of the solution space resembles a flip graph. However, observe that the exact structure of the flip graph is irrelevant and in fact, the flip graph needs to be only connected. Theorem~\ref{thm:topological-stateless-LB} hence extends to the following corollary.

\begin{corollary}\label{cor:topological-stateless-LB}
    For a stateless algorithm~$\mathcal{A}$ solving the kinetic EMST problem on $n$ (moving) input points, if $\mathcal{T}_\mathcal{S}$ is defined by a connected flip graph, then $\TS(\text{EMST}, \mathcal{T}_\mathcal{I}, \mathcal{T}_\mathcal{S}) = \Omega(\sqrt{n})$.
\end{corollary}

\section{Lipschitz stability}\label{sec:Lipstable}
The major drawback of topological stability analysis is that it still does not fully capture stable behavior; the algorithm must be continuous, but we can still make many changes to the solution in an arbitrarily small time frame. In Lipschitz stability analysis we additionally limit how fast the solution can change.

\subsection{Lipschitz stability analysis}
To formally define Lipschitz stability, we return to the continuous setting in Section~\ref{sec:framework}. Let $\Pi$ be an optimization problem with input instances from the set $\mathcal{I}$, solutions from the set $\mathcal{S}$, and optimization function $f$. An input $I\colon [0, 1] \rightarrow \mathcal{I}$ is a continuous path through input space $\mathcal{I}$ and an algorithm $\mathcal{A}$ maps $I(t) \in \mathcal{I}$ to a solution $S(t) \in \mathcal{S}$ for each time $t \in [0,1]$. To quantify how fast a solution changes as the input changes, we need to specify metrics $d_\mathcal{I}$ and $d_\mathcal{S}$ on $\mathcal{I}$ and $\mathcal{S}$, respectively. An algorithm $\mathcal{A}$ is \emph{$K$-Lipschitz stable} if $\St(\mathcal{A}) \leq K$, where $\St(\mathcal{A})$ is defined as in Equations~\ref{eq:continuousstability} and \ref{eq:continuousalgorithmstability}. For an output $S$ of a $K$-Lipschitz stable algorithm, this means that we bound how quickly $S$ can change relative to input $I$. In particular, if we consider two times $t,t'\in[0,1]$, then for this output $S$ it holds that $\Delta_\mathcal{S}(S(t), S(t')) \leq K \Delta_\mathcal{I}(I(t), I(t'))$, where $\Delta_\mathcal{S}$ and $\Delta_\mathcal{I}$ measure the distance traveled in solution and input space, respectively. Figure~\ref{fig:lipschitz-input-solution-space} gives an overview the above.

\begin{figure}
	\centering
	\includegraphics[page=1]{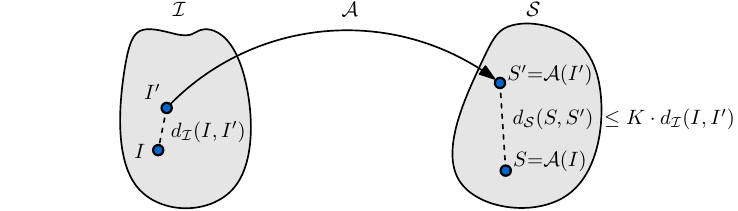}
	\caption{Algorithm $\mathcal{A}$ maps input instances from input space $\mathcal{I}$ to solution space $\mathcal{S}$. The metrics $d_\mathcal{I}$ and $d_\mathcal{S}$ allow us measure the distance traveled in input ($\Delta_\mathcal{I}$) and output ($\Delta_\mathcal{S}$).}
	\label{fig:lipschitz-input-solution-space}
\end{figure}

Note that by defining $K$-Lipschitz stability using Equation~\ref{eq:continuousstability}, we require that, for bounded $K$, a solution $S$ does not change unless input $I$ also changes. However, we can instead use the definition of stability in Equation~\ref{eq:continuousstability2}. Using that definition with the assumption that the input changes with at most unit speed (e.g., points moving with unit speed), allows the solution $S$ to change with speed at most $K$. This assumption is common in computational geometry, and we use it in~\cite{meulemans2019stability} to perform $K$-Lipschitz stability analysis.

Let $\mathcal{P}_\mathcal{I}$ be the set of continuous paths through $\mathcal{I}$. We are again interested in the approximation ratio $\LS$ of any $K$-Lipschitz stable algorithm with respect to $\OPT$. We call this ratio the \emph{Lipschitz stability ratio}, which is defined as
\begin{equation}
\LS(\Pi, K, d_\mathcal{I}, d_\mathcal{S}) = \inf_{\mathcal{A}} \sup_{I \in \mathcal{P}_\mathcal{I}} \sup_{t \in [0,1]} \frac{f(I(t), S(t))}{f(I(t), \OPT(I(t)))}
\end{equation}
where the infimum is taken over all $K$-Lipschitz stable algorithms. It is easy to see that $\LS(\Pi, K, d_\mathcal{I}, d_\mathcal{S})$ is lower bounded by $\TS(\Pi, \mathcal{T}_\mathcal{I}, \mathcal{T}_\mathcal{S})$ for the corresponding topologies $\mathcal{T}_\mathcal{I}$ and $\mathcal{T}_\mathcal{S}$ of $d_\mathcal{I}$ and $d_\mathcal{S}$, respectively.

As mentioned in Section~\ref{sec:framework}, analyses of this type are often difficult. First, we need to be very careful when choosing metrics the $d_\mathcal{I}$ and $d_\mathcal{S}$, as they should behave similarly with respect to scale. For example, let the input consist of a set of points in the plane and let $c I$ for $I \in \mathcal{I}$ be the instance obtained by scaling all coordinates of the points in $I$ by the factor $c$. Now $d_\mathcal{I}$ depends linearly on scale, that is $d_\mathcal{I}(c I, c I') \sim c d_\mathcal{I}(I, I')$. Moreover, assume that $d_\mathcal{S}$ is independent of scale. If we now scale the input instance, we change the relative speed between changes in the input and changes in the solution. Thus without scale invariance, the analysis is rendered meaningless.

Second, we need to be careful with discrete solution spaces. Using the flip graphs as mentioned in Section~\ref{sec:topostable} we can extend a discrete solution space to a continuous space by including the edges. However, it is not always straightforward to extend $d_\mathcal{S}$ over the edges of the flip graph: the distance between (combinatorial) solutions, which are represented by vertices in the flip graph, and between intermediate solutions, on the edges of the flip graph, can change as the input moves. In Section~\ref{sec:lipschitzemst}, we show how to overcome this problem for EMSTs, by carefully defining $d_\mathcal{S}$.

Typically we expect it to be hard to fully describe $\LS(\Pi, K, d_\mathcal{I}, d_\mathcal{S})$ as a function of $K$. However, it may be possible to obtain interesting results for certain values of $K$. One value of interest is the value of $K$ for which the approximation ratio equals or approaches the approximation ratio of the corresponding topological stability analysis. Another potential value of interest is the value of $K$ below which any $K$-Lipschitz stable algorithm performs asymptotically as poorly as a constant algorithm, always computing the same solution regardless of instance.

\subsection{Lipschitz stability of EMSTs}\label{sec:lipschitzemst}
We use the same setting of the kinetic EMST problem as in Section~\ref{sec:topoEMST}, except that, instead of topologies, we specify metrics for $\mathcal{I}$ and $\mathcal{S}$. For $d_\mathcal{I}$ we can simply use the metric in Equation~\ref{eq:Euclmetric}, which implies that points move with a bounded speed. For $d_\mathcal{S}$ we use a metric inspired by the edge slides of Section~\ref{sec:topoEMST}. To that end, we need to define how long a particular edge slide takes, or equivalently, how ``far'' an edge slide is. To make sure that $d_\mathcal{I}$ and $d_\mathcal{S}$ behave similarly with respect to scale, we let $d_\mathcal{S}$ measure the distance the sliding endpoint has traveled during an edge slide. However, this creates an interesting problem: the edge on which the endpoint is sliding may be moving and stretching/shrinking during the operation. This influences how long it takes to perform the edge slide. We need to be more specific to make this approach work: we define that (1) as the points are moving, the relative position (between $0$ and $1$ from starting endpoint to finishing endpoint) of a sliding endpoint is maintained without cost in $d_\mathcal{S}$, and (2) $d_\mathcal{S}$ measures the difference in relative position multiplied by the length $L(t)$ of the edge on which the endpoint is sliding. More tangibly, an edge slide performed by a $K$-Lipschitz stable algorithm can be performed in $t^*$ time such that $\int_0^{t^*} \frac{K}{L(t)} \mathrm{d} t = 1$, where $L(t)$ describes the length of the edge on which the endpoint slides as a function of time. Finally, the optimization function $f$ simply computes a linear interpolation of the cost on the edges of the flip graph defined by edge slides.

We now give an upper bound on $K$ below which any $K$-Lipschitz stable algorithm for kinetic EMST performs asymptotically as bad as any fixed tree. Given the complexity of the problem, our bound is fairly crude. We state it anyway to demonstrate the use of our framework, but we believe that a stronger bound exists. Before we go into the details, we first show the asymptotic approximation ratio of any spanning tree.

%\pagebreak
\begin{lemma}
	Any spanning tree on a set of $n$ points is an $O(n)$-approximation of the EMST.
\end{lemma}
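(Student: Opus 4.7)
The plan is to compare an arbitrary spanning tree $T$ to the EMST via a common quantity, namely the diameter $D$ of the point set $P$, where $D = \max_{p,q \in P} \|p - q\|$. I will sandwich both quantities between multiples of $D$ in opposite directions, and the factor of $n$ will appear from the number of edges in a spanning tree.

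First I would upper bound $|T|$. Every edge of any spanning tree on $P$ connects two points of $P$, so each edge has length at most $D$. Since a spanning tree on $n$ points has exactly $n-1$ edges, this yields the trivial bound $|T| \leq (n-1) D$.

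Next I would lower bound $f(P, \OPT)$, the length of the EMST, by $D$. Let $p, q \in P$ be two points realizing the diameter, so $\|p - q\| = D$. The EMST is connected and contains both $p$ and $q$, so there is a unique path $\pi$ in the EMST from $p$ to $q$. By the triangle inequality along $\pi$, the total Euclidean length of its edges is at least $\|p - q\| = D$. Consequently $f(P, \OPT) \geq D$.

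Combining the two bounds gives $|T| \leq (n-1) D \leq (n-1) \cdot f(P, \OPT)$, so $|T| / f(P, \OPT) = O(n)$, as required. The only subtle step is the lower bound on the EMST, but it is immediate once one invokes connectivity and the triangle inequality; there is no real obstacle here, which is consistent with the authors' remark that the bound is a fairly crude one offered to illustrate the framework.
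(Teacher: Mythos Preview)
Your proof is correct and follows essentially the same idea as the paper's: both bound each of the $n-1$ edges of the arbitrary tree by the EMST length and sum. The only cosmetic difference is that you route through the diameter $D$ (showing each edge $\leq D$ and $D \leq \OPT$ via the EMST path between a diametral pair), whereas the paper applies the path-versus-straight-line argument directly to every edge $(u',v')$ to get $d(u',v') \leq \OPT$; the underlying inequality is the same in both cases.
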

\begin{proof}
	Let $T$ be an EMST on point set $P$ with total edge length $\OPT$. Additionally let $u,v\in P$, and observe that the path along $T$ from $u$ to $v$ is at least the Euclidean distance between $u$ and $v$, $\|u,v\|\leq\prop{path}_T(u,v)$. Furthermore, any path along an EMST is at most as long as the total length of an EMST, $\prop{path}_T(u,v)\leq\OPT$.
	If we now take an arbitrary spanning tree $T'$ on the same point set $P$, then we know that each edge $(u',v')$ in this spanning tree has at most length $\|u',v'\|\leq\prop{path}_T(u',v')\leq\OPT$. Since $T'$ has $n-1$ edges, its total length is $O(n)\cdot\OPT$.
\end{proof}

\begin{figure}[b]
	\centering
	\includegraphics{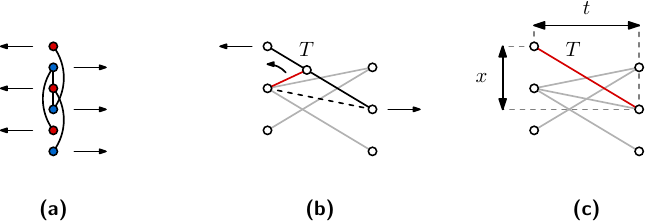}
	\caption{An instance where any $\frac{c}{\log n}$-Lipschitz stable algorithm performs poorly, for small enough constant $c>0$. \textbf{\textsf{(a)}} Red and blue points move in opposite directions. \textbf{\textsf{(b)}} The red edge slides over stretching edge $T$. \textbf{\textsf{(c)}} The length of edge $T$ at time $t$ is $\sqrt{x^2+t^2}$.}
	\label{fig:lipschitzneg}
\end{figure}

\begin{theorem}
	For a state-aware algorithm~$\mathcal{A}$ solving the kinetic EMST problem on $n$ (moving) input points, let $d_\mathcal{S}$ be the metric for edge slides, then $\LS(\text{EMST}, \frac{c}{\log n}, d_\mathcal{I}, d_\mathcal{S}) = \Omega(n)$ for a small enough constant $c > 0$.
\end{theorem}
\begin{proof}
	Consider the instance where $n$ points are placed equidistantly vertically above each other with distance $1/n$ between two consecutive points. Now let $\mathcal{A}$ be any $(c/\log n)$-Lipschitz stable algorithm for the kinetic EMST problem and let $T$ be the tree computed by $\mathcal{A}$ on this point set. We now color the points red and blue based on a $2$-coloring of $T$. We then move the red points to the left by $\frac{1}{2}$ and the blue points to the right by $\frac{1}{2}$ in the time interval $[0, 1]$ (see Figure~\ref{fig:lipschitzneg}a).
	This way every edge of $T$ will be stretched to a length of $\Omega(1)$ and thus the length of $T$ will be $\Omega(n)$. On the other hand, the length of the EMST in the final configuration is $\OPT = O(1)$. Therefore, we must perform an edge slide (see Figure~\ref{fig:lipschitzneg}b). However, we show that $\mathcal{A}$ cannot complete any edge slide. Consider any edge of $T$ and let $x$ be the initial (vertical) distance between its endpoints. Then the length of this edge can be described as $L(t) = \sqrt{x^2 + t^2}$ (see Figure~\ref{fig:lipschitzneg}c). Now assume that we want to slide an endpoint over this edge. To finish this edge slide before $t = 1$, we require that $\int_0^1 \frac{c}{\log n \sqrt{x^2 + t^2}} \mathrm{d} t \geq 1$. This solves to $c \log(1/x + \sqrt{1 + 1/x^2}) \geq \log n$. However, since $x \geq 1/n$, we get that $c \log(1/x + \sqrt{1 + 1/x^2}) \leq c \log(n + \sqrt{1 + n^2}) < \log n$ for $c$ small enough. Finally, since one edge slide can reduce the length of only one edge to $o(1)$, the cost of the solution at $t = 1$ computed by $\mathcal{A}$ is $\Omega(n)$. Thus, $\LS(\text{EMST}, \frac{c}{\log n}, d_\mathcal{I}, d_\mathcal{S}) = \Omega(n)$ for a small enough constant $c > 0$.
\end{proof}

\section{Conclusion}\label{sec:conclusion}
We presented a framework for algorithm stability, which includes three types of stability analysis: event stability, topological stability, and Lipschitz stability. The framework also distinguishes between three models for algorithms on time-varying data: stateless, state-aware and clairvoyant algorithms. We have demonstrated the use of this framework by applying the different types of analysis to algorithms for the kinetic EMST problem, deriving new results of independent interest. By illustrating different types of stability analysis with increasing degrees of complexity, we hope to make stability analysis for algorithms more accessible. In work that followed the initial conference publication of this paper, we gave further evidence of the applicablity of our framework by using it to analyze the stability of other geometric problems, namely the $k$-center problem~\cite{hoog2018topological} and orientation-based shape descriptor problems~\cite{meulemans2019stability}.

However, the framework that we presented does not give a complete picture: we do not yet consider the trade-offs between stability and running time. In this context one might ask the following questions: is there a clairvoyant algorithm to efficiently compute a stable function of solutions over time that is optimal with regard to solution quality? Or, in a more restricted sense, can one efficiently compute one solution that is best for all inputs over time? Even for state-aware algorithms, we can further study the design of efficient algorithms that are $K$-Lipschitz stable and perform well with regard to solution quality. These questions have been studied for certain geometric problems~\cite{DBLP:journals/jgaa/AkitayaBBCMSS23,meulemans2019stability,hoog2018topological}, but are open for many others.

Additionally, there are still many problems left open for the kinetic EMST problem. Future work could focus on the algorithmic models other than state-aware algorithms. However, for state-aware algorithms we could strive to tighten the bounds on the topological stability ratio for both edge slides and rotations. We would also like to extend our results on the Lipschitz stability of kinetic EMSTs, by considering $K$-Lipschitz stable solutions for different values of $K$. A description of the Lipschitz stability ratio as a function of $K$ would be the ultimate goal.

\mypar{Acknowledgments}
W. Meulemans and J. Wulms were (partially) supported by the Netherlands eScience Center (NLeSC) under grant number 027.015.G02. B. Speckmann and K. Verbeek were supported by the Netherlands Organisation for Scientific Research (NWO) under project no.~639.023.208 and no.~639.021.541, respectively.

\bibliographystyle{plainurl}
\bibliography{references}

% \begin{thebibliography}{1}

% \bibitem{ah77}
% K.~Appel and W.~Haken.
% \newblock Every planar map is four colorable.
% \newblock {\em Illinois Journal of Mathematics}, 21:439--567, 1977.

% \bibitem{h1886}
% P.~Hampson.
% \newblock {\em The Romance of Mathematics}.
% \newblock Oxford Press, Bigg City, 1886.

% \bibitem{rsst97}
% N.~Robertson, D.~P. Sanders, P.~Seymour, and R.~Thomas.
% \newblock The four-colour theorem.
% \newblock {\em Journal of Combinatorial Theory, Series B}, 70(1):2--44, 1997.

% \end{thebibliography}

\end{document}